\newcommand{\norm}[1]{\left\lVert#1\right\rVert}
\pgfplotsset{compat=newest}
\pgfplotsset{plot coordinates/math parser=false}
\newcommand{\Pc}{\mathcal{P}}
\newcommand{\Ec}{\mathcal{E}}
\newtheorem{theorem}{Theorem}
\newtheorem{assumption}{Assumption}
\newtheorem{corollary}{Corollary}
\newtheorem{remark}{Remark}
\newtheorem{proposition}{Proposition}
\newcommand\scalemath[2]{\scalebox{#1}{\mbox{\ensuremath{\displaystyle #2}}}}
\newcommand{\B}{B}
\newcommand{\R}{\mathbb{R}}
\newcommand{\D}{\mathbb{D}}
\newcommand{\So}{\mathbb{S}}
\newcommand{\1}{\mathbf{1}}
\newcommand{\0}{\mathbf{0}}
\newcommand{\I}{\mathbf{I}}
\definecolor{blue_set}{RGB}{204,229.5,255}
\definecolor{pink_set}{RGB}{255,204,229.5}
\definecolor{grey_set}{RGB}{153,153,153}
\definecolor{green_set}{RGB}{102,204,102}
\definecolor{cyan_set}{RGB}{69,243,248}
\definecolor{yellow_set}{RGB}{229.5000,229.5000,114.7500}
\definecolor{red_set}{RGB}{231,172,116}
\definecolor{red_border}{RGB}{229.5,114.75,114.75}
\definecolor{terminal_set}{RGB}{102,204,255}
\definecolor{mRPI_set}{RGB}{102,255,102}
\newcommand\munderbar[1]{\underaccent{\bar}{#1}}
\definecolor{wheat}{rgb}{0.96,0.87,0.70}
\definecolor{mario}{rgb}{0.8,0.8,1}
\definecolor{SamComm}{rgb}{0.9,0.9,0.1}
\definecolor{ao}{rgb}{0.0, 0.5, 0.0}
\title{\LARGE \bf
	Data-driven synthesis of Robust Invariant Sets and Controllers
	%		
	%		System Identification Based on Invariant Sets for Tube-based Robust Model Predictive Control}
}
\author{Sampath Kumar Mulagaleti, Alberto Bemporad, Mario Zanon% <-this % stops a space
	\thanks{The authors are with IMT School  for  Advanced  Studies  Lucca, Italy.}%
	\thanks{\texttt{\scriptsize\{s.mulagaleti,alberto.bemporad,mario.zanon\}@imtlucca.it}}%
}
\begin{document}

	\maketitle
	\thispagestyle{empty}
	\pagestyle{empty}

	%%%%%%%%%%%%%%%%%%%%%%%%%%%%%%%%%%%%%%%%%%%%%%%%%%%%%%%%%%%%%%%%%%%%%%%%%%%%%%%%
	\begin{abstract}
		This paper presents a method to identify an uncertain linear time-invariant (LTI) prediction model for tube-based Robust Model Predictive Control (RMPC). The uncertain model is determined 
		from a given state-input dataset by formulating and solving a Semidefinite Programming problem (SDP), that also determines a static linear feedback gain and corresponding invariant sets satisfying the inclusions required to guarantee recursive feasibility and stability of the RMPC scheme, while minimizing an identification criterion. 
		%	We then propose a solution method using a sequential convex programing approach.
		%	We formulate the identification problem as a data-driven positive invariant (PI) set computation problem with the following goals : Given an input-state dataset, select an uncertain plant model consistent with the dataset, and compute corresponding PI sets that can directly be used to synthesize the RMPC scheme. 
		%	
		%	 To this end, we formulate a nonlinear Semidefinite Program (SDP) that computes the system matrices and an additive disturbance set consistent with a state-input dataset, along with a static feedback gain and polytopic positive invariant (PI) sets suitable for RMPC synthesis. 
		As demonstrated through an example, the proposed
			concurrent approach provides less conservative invariant sets than a sequential approach.

	\end{abstract}
	
	\section{Introduction}
	%\mario{This is a bit too detailed for an introduction}
	The tube-based RMPC scheme of \cite{Mayne2005} is a popular method to design robust feedback controllers using
	LTI plant models
	\begin{equation}
		\label{eq:system}
		\scalemath{0.95}{
			x(t+1)={A}x(t)+{B}u(t)+w(t),}
	\end{equation}
	subject to state constraints $x \in \mathcal{X} \subset \R^{n_x}$, input constraints $u \in \mathcal{U} \subset \R^{n_u}$, and additive unknown but bounded disturbances $w \in \mathcal{W}$.
	%with robust feasibility and stability guarantees. 
	Besides model~\eqref{eq:system}, the RMPC scheme also requires a feedback gain $K$, and a Robust Positive Invariant (RPI) set \cite{Blanchini2015} in which the state of the  system $x(t+1)=({A+BK})x(t)+w(t)$ can be enforced to persistently belong.

	Given a model $(A,B,\mathcal{W})$, the problem of computing constrained RPI sets has been well studied in the literature. We focus on polytopic RPI sets for their reduced conservatism \cite{Blanchini2015}. 
	%	of these sets is less conservative than elliptical ones.
	%	
	%	We focus on polytopic characterizations, since they are typically less conservative than ellipsoidal characterizations .
	%
	% Popular characterizations of these sets include ellipsoids and polytopes \cite{Blanchini2015}, with polytopes being less conservative: hence, we focus on polytopic characterizations.
	%
	%
	%Typically, these sets are characterized as ellipsoid or polytopes \cite{Blanchini2015}, with the polytopes generally being less conservative. Hence, we focus on the latter. 
	Given $K$, methods to compute tight invariant approximations of the minimal RPI (mRPI) sets were presented in 
	\cite{Rakovic2005,Rakovic2013,Trodden2016}. They characterize the uncertainty tube that bounds the deviation of the actual state trajectory from a central nominal one.
	%
	%
	% are used to bound the deviation of the plant trajectories from a nominal trajectory in RMPC. 
	Similarly, 
	methods to compute maximal PI (MPI) sets were presented in \cite{Gilbert1991,Kolmanovsky1998}.
	They can be used as terminal sets in RMPC to guarantee feasibility and stability.
	% 
	%  Similarly, methods to compute approximations of the minimal RPI (mRPI) set were presented in \cite{Rakovic2005,Rakovic2013,Trodden2016}: these sets are used to bound the deviation of the plant trajectories from a nominal trajectory. 
	It is known that improved RPI sets can be computed by also optimizing over $K$. 
	%
	%It is well known that computing a suitable feedback gain $K$ along with the RPI sets can result in larger maximal and smaller minimal RPI sets, 
	%
	%
	%reduce conservatism : \mario{which conservatism?}
	%Hence, recent literature has focused on the development of algorithms that compute a feedback gain $K$ together with the RPI set: 
	In \cite{Tahir2015,Liu2019}, methods to compute RPI sets along with $K$ were presented.
	%	 and were extended in \cite{} to compute full-complexity polytopic RPI sets. 
	Earlier approaches in~\cite{Kothare1996,Bem98clw} optimize over $K$ and reduce conservativeness in RMPC.
	
	In order to identify a model $(A,B,\mathcal{W})$,
	%	All the aforementioned approaches require an uncertain model $(A,B,\mathcal{W})$ of the plant. 
	physics-based, regression approaches and/or set-membership approaches \cite{Ljung1986,Kosut1992} can be used.
	%	
	%	 along with experiments to characterize uncertainty, to get such a model, or alternatively set-membership identification techniques \cite{Kosut1992}.
	Methods that take control design into account while performing system identification were presented in \cite{Helmicki1991,Terzi2019}. It was demonstrated in \cite{Yuxiao2021} that if
	system identification can be combined with RPI set computation, then conservativeness in the computed RPI sets can be reduced. Motivated by this observation, we present a method to concurrently select a model $(A,B,\mathcal{W})$ and synthesize RPI sets for RMPC.
		%	  	
		%	  	 synthesize invariant sets for RMPC synthesis along with concurrent select of a model $(A,B,\mathcal{W})$.
		%	  	  Moreover, we also provide robust invariance guarantees with respect to the underlying plant.
		%	  	  that the identified invariant sets are robust for the underlying plant.
		%	  	   The method we propose can be viewed as an alternative to \cite{Terzi2019}, in which a model $(A,B,\mathcal{W})$ is first identified, and then RPI sets are computed.
		Alternative methods that directly compute feedback controllers using an implicit plant description based on measured trajectories were presented in \cite{DePersis2020,Berberich2021,Berberich2020,coulson2019dataenabled,vanwaarde2020noisy}.
%		Alternative methods that bypass the system identification stage, and directly compute feedback controllers %parametrized by the measured dataset 
%		were presented in  in which an implicit plant description based on measured trajectories is characterized.
		%		in which the subspace to which the future plant output belongs is characterized using past trajectories. 
		However, these methods cannot be used directly to select a model and RPI sets for RMPC synthesis. While a similar trajectory-based method in \cite{bisoffi2020controller} can be used to synthesize RPI-inducing controllers in a given polyhedral set, it cannot be used directly for RPI set synthesis.
%		to directly compute a feedback controller that enforces robust invariance in a given polyhedral set. 
%		
%		In \cite{bisoffi2020controller}, a feedback controller is designed to enforce RPI of a given polyhedral set directly from data.
		%	  	  in which the subspace to which the future plant output belongs is characterized using past trajectories. However, they cannot be used directly to select a model and synthesize RPI sets for the tube-based RMPC controller of \cite{Mayne2005}.  
		%	  	   the components required for the tube-based RMPC scheme.
		%	  	   These approaches, however, cannot be used directly to compute the components required for a classical RMPC scheme.
		%	  	   These methods characterize the subspace to which the future plant output belongs using collected input/output trajectories. 
		
	%	
	%	
	%	then a model suitable to optimize the size of the invariant set rather than fitting the data can be selected, thus reducing conservativeness.
	% then the resulting invariant sets can be less conservative, since the model is selected to optimize the size of the invariant set, rather than best fit the data.
	%
	%for $(A,B,\mathcal{W})$ can take the RPI set computation into account, then less conservative controllers can be designed.
	%%
	%
	%However, to the best of our knowledge, only  \cite{Yuxiao2018,Yuxiao2021} has presented an identification method based on invariant sets. This method, however, is limited to the computation of approximations to the minimal RPI set, which MPC requires the computation of both minimal and maximal RPI sets.
	
	%	{\color{blue}
	{\textit{Contribution:}} We consider a dataset of state-input measurements from a plant, and present a method to identify an LTI model \eqref{eq:system}, along with RPI sets suitable for RMPC synthesis. To this end, we characterize a set of models $(A,B,\mathcal{W})$ that can describe the plant behavior, and use nonlinear matrix inequality (NLMI)-based results from \cite{Liu2019} on RPI set computation to formulate a NonLinear Program with Matrix Inequalities (NLPMI) that selects a model \eqref{eq:system} along with suitable RPI sets and a corresponding feedback matrix $K$.
	%		 nonconvex Semidefinite Program (SDP) 
	%		
	%		 Then, we formulate a nonconvex Semidefinite Program (SDP) that selects a model $(A,B,\mathcal{W})$ along suitable RPI sets.}
	%		
	%		model from the characterized parameter set, for which it computes RPI sets suitable for RMPC synthesis for the underlying plant.}
	%		 \sout{{\color{red}by adapting the results from \cite{Liu2019}}}
	%	, in which we minimize the uncertainty tube cross-section and maximize the size of the terminal set, while ensuring a reasonable closed-loop performance. 
	We then present a method to solve the NLPMI based on a Sequential Convex Programming (SCP) approach that 
	%		 {\color{red}
	we tailor to preserve
	%	 	}
	feasibility of the iterates and satisfy a cost decrease condition.
	%		present a Sequential Convex Programming (SCP) method that preserves iterate feasibility and satisfies a c to solve the nonconvex SDP. 
	Finally, we demonstrate the efficacy of the method using a simple numerical example.
	%	 and show that the method performs favorably against the approach of \cite{Yuxiao2018} when specialized to compute an approximation of the mRPI set.
	%		
	%		 solution method based on a Sequential Convex Programming (SCP) approach that preserves feasibility of the iterates and 
	%		
	%		 that extends the  Linear Matrix Inequality (LMI)-based invariant set computation approach of \cite{Liu2019}, to identify an uncertain model of a plant from a state-input dataset, along with RPI sets suitable for the synthesis of a tube-based RMPC controller.
	%}
	%	In this paper, motivated by the observations in \cite{Yuxiao2018}, we present a method 
	%	that extends the Linear Matrix Inequality (LMI)-based invariant set computation approach of \cite{Liu2019}, to identify an uncertain model of a plant from a state-input dataset, along with RPI sets suitable for the synthesis of a tube-based RMPC controller. We remark that the method in \cite{Yuxiao2018} can only be used to compute an approximation of the mRPI set, while the method we propose can be used to simultaneously compute all the components required for RMPC synthesis.
	%	%\mario{not clear what is the novelty and why this approach is necessary}
	%	To this end, we present an identification criterion that minimizes the uncertainty tube cross-section and maximizes the size of the terminal set, while ensuring a reasonable closed-loop performance.
	%	%  maximize the stabilizable region of MPC while ensuring a reasonable closed loop performance. 
	
	\textit{Notation:} $\Pc(A,b):=\{x\in\R^n:\ -b \leq Ax \leq b\}$ is a symmetric polytope, and $\Ec(Q,r):=\{x\in\R^n:\ x^{\top}Q x \leq r\}$ is an ellipsoid. The set of $m$ dimensional positive vectors is denoted as $\R^m_+$, positive definite $m \times m$ diagonal matrices is denoted as $\D_+^m$, positive definite $m \times m$ symmetric matrices as $\So_+^m$. The symbols $\1$, $\0$ and $\I$ denote all-ones, all-zeros, and identity matrix, respectively. The set $\mathbb{I}_m^n:=\{m,\cdots,n\}$ is the set of natural numbers between $m$ and $n$. $T_i$ and $T_{ij}$ denote row $i$ and element $(i,j)$ of matrix $T \in \R^{n \times m}$, and $\norm{T}_{\infty}:=\max_{i \in \mathbb{I}_1^n} \sum_{j=1}^{m} |T_{ij}|$ is the $\infty$-norm of the matrix.
	%	If $T$ is a square matrix, $\rho(T)$ denotes its spectral radius.
	We define $\norm{v}_S^2:=v^{\top}Sv$, and use $*$ to represent symmetrically identifiable matrix entries. Given two compact sets $\mathcal{S}_1,\mathcal{S}_2 \subset \R^{n}$, the Minkowski sum is defined as $\mathcal{S}_1 \oplus \mathcal{S}_2:=\{x+y:x\in\mathcal{S}_1,y\in\mathcal{S}_2\}$, and the Minkowski difference as $\mathcal{S}_1 \ominus \mathcal{S}_2:=\{x:\{x\}\oplus\mathcal{S}_2\subseteq\mathcal{S}_1\}$. We write
		$C_1 \mathcal{P}(A_1,b_1) \oplus C_2 \mathcal{P}(A_2,b_2)=[C_1 \ C_2] \mathcal{P}(\mathrm{diag}(A_1,A_2),[b_1^{\top} b_2^{\top}]^{\top})$, where	 $\mathrm{diag}(A_1,A_2):=\begin{bmatrix}A_1 & \0 \\ \0 & A_2 \end{bmatrix}$ is a block-diagonal matrix.
	%	 that can be uniquely identified from symmetry.
	%
	%Given $v \in \R^n$ and $S \in \R^{n\times n}$, we define $\norm{v}_S^2:=v^{\top}Sv$. We use $*$ to represent matrix entries that can be uniquely identified from symmetry.
	%Schur complement argument refers to the result that if $A=A^{\top}$, $C=C^{\top}$, then 
	%\begin{equation*}\small
	%\scalemath{0.9}{\begin{bmatrix} A & B \\ * & C \end{bmatrix} \succ 0 \Leftrightarrow A\succ 0, C-B^{\top}A^{\-1}B \succ 0 \Leftrightarrow C \succ 0, A-BC^{\-1}B^{\top} \succ 0.}
	%\end{equation*}
	\section{Problem formulation}
	We briefly recall the tube-based RMPC scheme from \cite{Mayne2005}.
	%
	%along with the properties relevant to formulate a suitable identification criterion.
	%Given an LTI model of a plant
	%\begin{equation}
	%	\label{eq:system}
	%	\scalemath{0.95}{
	%	x(t+1)={A}x(t)+{B}u(t)+w(t)}
	%\end{equation}
	%with state $x \in \R^{n_x}$, input $u \in \R^{n_u}$ and disturbance $w \in \mathcal{W}$, where $\mathcal{W}$ is a compact additive disturbance set, 
	Given system \eqref{eq:system}, consider the nominal model $\hat{x}(t+1)={A}\hat{x}(t)+{B}\hat{u}(t)$,
	%\begin{align}
	%	\label{eq:nominal_model}
	%	\scalemath{0.95}{
	%	\hat{x}(t+1)={A}\hat{x}(t)+{B}\hat{u}(t),}
	%\end{align}
	and parameterize the plant input as $u(t)=\hat{u}(t)+{K}(x(t)-\hat{x}(t)),$
	%\begin{align}
	%	\label{eq:input_parametrization}
	%	\scalemath{0.95}{
	%	u(t)=\hat{u}(t)+{K}(x(t)-\hat{x}(t)),}
	%\end{align}
	where ${K}$ is a static feedback gain. Assuming that the feedback gain is stabilizing for $(A,B)$,
	%	, i.e., $\rho({A}+{B}{K})<1$, 
	the state error $\Delta x :=x - \hat{x}$ with dynamics $\Delta x(t+1)=({A}+{B}{K})\Delta x(t)+w(t)$ belongs to the RPI set $\Delta \mathcal{X}$
	\begin{align}
		\label{eq:RPI_property}
		\scalemath{0.95}{
			\text{if $\Delta x(0) \in {\Delta \mathcal{X}}$, \quad and } \quad 
			(A+BK)\Delta \mathcal{X}\oplus \mathcal{W} \subseteq \Delta \mathcal{X}}.
	\end{align}
	%	
	%	with dynamics $\Delta x(t+1)=({A}+{B}{K})\Delta x(t)+w(t)$ belongs to the Robust Positive Invariant (RPI) set ${\Delta \mathcal{X}}$ if $\Delta x(0) \in {\Delta \mathcal{X}}$  and
	%	\begin{align}
	%		\label{eq:RPI_property}
	%		\scalemath{0.9}{
	%			(A+BK)\Delta \mathcal{X}\oplus \mathcal{W} \subseteq \Delta \mathcal{X}}.
	%	\end{align}
	Hence, $x$ always belongs to the uncertainty tube with cross-section $\Delta \mathcal{X}$ around $\hat{x}$, i.e.,
	%that serves as the uncertainty tube cross section.
	%
	%
	% the property $\Delta x(t) \in {\Delta \mathcal{X}} \implies \Delta x(t+1) \in {\Delta \mathcal{X}}$ for all $w(t) \in \mathcal{W}$. T
	$x(t) \in \hat{x}(t)\oplus {\Delta \mathcal{X}}, \forall t \geq 0$.
	%  implies $ x(t+1)\in \hat{x}(t+1)\oplus {\Delta \mathcal{X}}$. 
	% It can also be seen from \eqref{eq:input_parametrization} that if $\Delta x(t) \in {\Delta \mathcal{X}}$. 
	The RMPC scheme then enforces $\hat{x} \in \mathcal{X} \ominus {\Delta \mathcal{X}}$ and $\hat{u} \in \mathcal{U} \ominus {K}{\Delta \mathcal{X}}$, and computes
	%	 so that $x \in \mathcal{X}$ and $u \in \mathcal{U}$. 
	$\mathbf{z}:=\{\hat{x}(t),\ldots,\hat x(t+N),\hat{u}(t),\ldots,\hat u(t+N-1)\}$ online given $x(t)$ by solving
	%	 Given $x(t)$, the variables $\mathbf{z}:=\{\hat{x}(t),\ldots,\hat x(t+N),\hat{u}(t),\ldots,\hat u(t+N-1)\}$
	%	are computed online by solving 
	% at each instant $t$ by measuring $x(t)$, and solving the quadratic program
	% of system \eqref{eq:system} satisfies $x \in \mathcal{X}$ and $u \in \mathcal{U}$. These nominal variables are%
	%
	%Using the invariance property, the RMPC scheme computes $\hat{x} \in \mathcal{X} \ominus {\Delta \mathcal{X}}$ and $\hat{u} \in \mathcal{U} \ominus {K}{\Delta \mathcal{X}}$, such that of system \eqref{eq:system} satisfies $x \in \mathcal{X}$ and $u \in \mathcal{U}$. The nominal variables are computed at each time instant by measuring $x(t)$ and solving the following optimization problem:
	%% computed by measuring $x(t)$, and solving
	%
	%: Assuming that system \eqref{eq:system} is subject to state constraints $\mathcal{X}$ and input constraints $\mathcal{U}$, the RMPC scheme computes $\hat{x} \in \mathcal{X} \ominus {\Delta \mathcal{X}}$ and $\hat{u} \in \mathcal{U} \ominus {K}{\Delta \mathcal{X}}$, such that of system \eqref{eq:system} satisfies $x \in \mathcal{X}$ and $u \in \mathcal{U}$. These nominal variables are computed at each time instant by measuring $x(t)$ and solving the following optimization problem:
	%
	% At each time instant, the nominal input $\hat{u}(t)$ is computed by measuring $x(t)$ and solving the following optimization problem:
	%\begin{subequations*}
	%	\label{eq:RMPC_problem_proper}
	\begin{align}
		\scalemath{1}{\min_{\mathbf{z}} \ } & \scalemath{0.9}{\sum\limits_{s=t}^{t+N-1} 
			\norm{\begin{bmatrix}\hat{x}(s)^{\top} \quad \hat{u}(s)^{\top}\end{bmatrix}^{\top}}^2_{
				{H}_{\mathrm{Q}}}
			+
			\norm{\hat{x}(t+N)}^2_{{P}_{\mathrm{Q}}}}
		&& \hspace{2em} \nonumber \\
		\scalemath{0.9}{	\mathrm{s.t.} }\ \ 
		& \scalemath{0.95}{\hat{x}(s+1)={A} \hat{x}(s)+{B} \hat{u} (s), } &&\hspace{-30pt} \scalemath{0.9}{ s \in \mathbb{I}_t^{t+N-1},  }
		\nonumber \\
		& \scalemath{0.9}{ \hat{x}(s) \in \mathcal{X} \ominus {\Delta \mathcal{X}}, \  \hat{u}(s) \in \mathcal{U} \ominus {K\Delta \mathcal{X}}, }&&\hspace{-30pt}   \scalemath{0.9}{ s \in \mathbb{I}_{t+1}^{t+N-1},  } \nonumber \nonumber \\
		& \scalemath{0.9}{x(t) \in \left\{\hat{x}(t)\right\} \oplus {\Delta \mathcal{X}}, \ \ \hat{x}(t+N) \in {\mathcal{X}_{\mathrm{t}}}},
		\hspace{-30em} && \label{eq:RMPC_controller}
		%	\hat{u}(s) \in \mathcal{U} \ominus {K\Delta \mathcal{X}},
		%		& \scalemath{0.9}{ \hat{x}(t+N) \in {\mathcal{X}_{\mathrm{t}}},}  
	\end{align}
	%	\begin{align}
	%		\scalemath{0.9}{\min_{\mathbf{z}} \ } & \scalemath{0.9}{\sum\limits_{s=t}^{t+N-1} 
	%		\norm{\begin{bmatrix}\hat{x}(s) \\ \hat{u}(s)\end{bmatrix}}^2_{
	%		{H}_{\mathrm{Q}}}
	%		            +
	%	           \norm{\hat{x}(t+N)^{\top}}^2_{{P}_{\mathrm{Q}}}}
	%		 && \hspace{2em} \label{eq:RMPC_controller} \\
	%		\scalemath{0.9}{	\mathrm{s.t.} }\ \ & \scalemath{0.9}{x(t) \in \left\{\hat{x}(t)\right\} \oplus {\Delta \mathcal{X}}, \ \ \hat{x}(t+N) \in {\mathcal{X}_{\mathrm{t}}}},
	%		\hspace{-30em} && \nonumber \\
	%		& \scalemath{0.9}{\hat{x}(s+1)={A} \hat{x}(s)+{B} \hat{u} (s), } && \scalemath{0.9}{ s \in \mathbb{I}_t^{t+N-1},  }
	%	\nonumber \\
	%		& \scalemath{0.9}{ \hat{x}(s) \in \mathcal{X} \ominus {\Delta \mathcal{X}}, \  \hat{u}(s) \in \mathcal{U} \ominus {K\Delta \mathcal{X}}, }&&  \scalemath{0.9}{ s \in \mathbb{I}_{t+1}^{t+N-1},  } \nonumber \nonumber 
	%%	\hat{u}(s) \in \mathcal{U} \ominus {K\Delta \mathcal{X}},
	%%		& \scalemath{0.9}{ \hat{x}(t+N) \in {\mathcal{X}_{\mathrm{t}}},}  
	%	\end{align}
	%where $\mathbf{z}:=\{\hat{x}(t),\ldots,\hat x(t+N),\hat{u}(t),\ldots,\hat u(t+N-1)\}$, 
	where $\mathcal{X}_{\mathrm{t}}$ is the terminal set.
	We assume that the set $\Delta \mathcal{X}$ is \textit{small} enough for feasibility of Problem \eqref{eq:RMPC_controller}, i.e.,
		\begin{align}
		\label{eq:smallRPI}
		\scalemath{0.9}{
			\Delta \mathcal{X} \subset \mathcal{X}, \ K\Delta \mathcal{X} \subset \mathcal{U},}
		\end{align}
	and $\mathcal{X}_{\mathrm{t}}$ is a PI set for $\hat{x}(t+1)=(A+BK)\hat{x}(t)$ that satisfies
%	We also assume that the set $\mathcal{X}_\mathrm{t}$ is a PI set for $\hat{x}(t+1)=(A+BK)\hat{x}(t)$ in the tightened sets, i.e., 
	\begin{align}
			\label{eq:PI_required}
		\scalemath{0.9}{ 
			({A}+{B}{K}){\mathcal{X}_{\mathrm{t}}}\subseteq {\mathcal{X}_{\mathrm{t}}}\subseteq \mathcal{X}\ominus  {\Delta \mathcal{X}}, \quad {K}{\mathcal{X}_{\mathrm{t}}} \subseteq \mathcal{U} \ominus {K} {\Delta \mathcal{X}}.} 	
	\end{align}
%	
%	Assuming that the inclusions
%	\begin{align}
%		\label{eq:smallRPI}
%		\scalemath{0.9}{
%			\Delta \mathcal{X} \subset \mathcal{X}, \ K\Delta \mathcal{X} \subset \mathcal{U},} \\
%		\label{eq:PI_required}
%		\scalemath{0.9}{ 
%			({A}+{B}{K}){\mathcal{X}_{\mathrm{t}}}\subseteq {\mathcal{X}_{\mathrm{t}}}\subseteq \mathcal{X}\ominus  {\Delta \mathcal{X}}, \quad {K}{\mathcal{X}_{\mathrm{t}}} \subseteq \mathcal{U} \ominus {K} {\Delta \mathcal{X}},} 	
%	\end{align}
%	hold, 
%	i.e., the RPI set $\Delta \mathcal{X}$ is \textit{small} enough for feasibility, and $\mathcal{X}_\mathrm{t}$ is a PI set for $\hat{x}(t+1)=(A+BK)\hat{x}(t)$ in the tightened sets, 
%	
	Then $\Omega_{N}:=\{x: \text{\eqref{eq:RMPC_controller} is feasible with $x(t)=x$}\}$
	%\begin{align}
	%	\label{eq:domain_of_attraction}
	%	\scalemath{0.95}{
	%	\Omega_{N}:=\{x: \text{\eqref{eq:RMPC_controller} is feasible with $x(t)=x$}\}}
	%\end{align}
	is such that for each $x(t) \in \Omega_N$, there recursively exists an optimal solution $\mathbf{z}_*:=\{\hat{x}_*(t),\ldots,\hat{x}_*(t+N),\hat{u}_*(t),\ldots,\hat{u}_*(t+N-1)\}$ \cite[Proposition 2]{Mayne2005}. Then, the input 
	$u(t):=\hat{u}_*(t)+{K}(x(t)-\hat{x}_*(t))$ is applied to the plant. Moreover, if $(H_{\mathrm{Q}},P_{\mathrm{Q}})$ are such that $P_{\mathrm{Q}}$ is the solution of the Discrete Algebraic Riccati equation (DARE) formulated using $H_{\mathrm{Q}}$ for the system $(A,B)$, and $K$ is corresponding optimal feedback gain, $\Delta \mathcal{X}$ is exponentially stable from every $x \in \Omega_N$ \cite[Theorem 1]{Mayne2005}.
	%	
	%	with
	%	domain of attraction $\Omega_N$ \cite[Theorem 1]{Mayne2005}.
	%	 satisfy the Discrete Algebraic Riccati equation (DARE),  
	%\begin{align}
	%	\label{eq:dissipativity}
	%	\scalemath{0.95}{
	%	\norm{(A+BK)x}_{P_{\mathrm{Q}}}^2-\norm{x}_{P_{\mathrm{Q}}}^2 \leq -\norm{[x^{\top} \ (Kx)^{\top}]^{\top}}^2_{H_{\mathrm{Q}}}}
	%\end{align}
	%holds for each $x \in \mathcal{X}_\mathrm{t}$, 
	%	the set $\Delta \mathcal{X}$ is robustly exponentially stable with
	%	domain of attraction $\Omega_N$ \cite[Theorem 1]{Mayne2005}. %\mario{cite}
	%
	%
	%The cost matrix ${H}_{\mathrm{Q}}:=\begin{bmatrix}{Q}_{\mathrm{Q}} & {S}_{\mathrm{Q}} \\ {S}^{\top}_{\mathrm{Q}} & {R}_{\mathrm{Q}}\end{bmatrix}$ is such that ${K}$ is the associated LQ feedback gain, and $P_{\mathrm{Q}}$ is the solution of the DARE. The terminal set ${\mathcal{X}_{\mathrm{t}}}$ is chosen to be a PI set satisfying 
	%%
	%%the inclusions $({A}+{B}{K}){\mathcal{X}_{\mathrm{t}}}\subseteq {\mathcal{X}_{\mathrm{t}}}\subseteq \mathcal{X}\ominus  {\Delta \mathcal{X}}$ and ${K}{\mathcal{X}_{\mathrm{t}}} \subseteq \mathcal{U} \ominus {K} {\Delta \mathcal{X}}$, 
	%and it ensures that with $\hat{u}(s)={K}\hat{x}(s)$ for all $s \geq t+N$, we have constraint satisfaction. Denoting the optimal solution of the problem by $\mathbf{z}_*:=\{\hat{x}_*(t),\ldots,\hat{x}_*(t+N),\hat{u}_*(t),\ldots,\hat{u}_*(t+N-1)\}$ for \eqref{eq:RMPC_controller}. , the input 
	%$u(t):=\hat{u}_*(t)+{K}(x(t)-\hat{x}_*(t))$ is applied to the plant.
	
	%\subsubsection{\textbf{Identification for tube-based RMPC}} 
	%\label{sec:id_for_RMPC}
 \textit{\textbf{Problem description}}: 
		We consider a plant with dynamics $x(t+1)=f_{\mathrm{tr}}(x(t),u(t),v(t))$ that is subject to bounded inputs $u(t)$ and unknown but bounded disturbances $v(t) \in \mathcal{V}_{\mathrm{tr}}\subset \R^{n_v}$. Assuming that $f_{\mathrm{tr}}$ and $\mathcal{V}_{\mathrm{tr}}$ are unknown, we collect a dataset $\mathcal{D}:=\{x_{\mathrm{D}}(t),u_{\mathrm{D}}(t), \ t \in \mathbb{I}_1^{T}\}$ of state-input measurements from the plant. Using $\mathcal{D}$, we propose a method to compute $(A,B,\mathcal{W},K,\Delta \mathcal{X},\mathcal{X}_{\mathrm{t}})$ that satisfy \eqref{eq:system}, \eqref{eq:RPI_property}, \eqref{eq:smallRPI}, \eqref{eq:PI_required} required for RMPC synthesis, while optimizing some criterion.
	%	that are suitable to synthesize an RMPC controller for the plant, i.e., satisfy \eqref{eq:system}, \eqref{eq:RPI_property}, \eqref{eq:smallRPI}, \eqref{eq:PI_required}, while optimizing some criterion.
	%		
	%		
	%		In this paper, we assume that the plant to be controlled has the dynamics $x(t+1)=A_{\mathrm{tr}} x(t)+B_{\mathrm{tr}}u(t)+v(t)$, where $A_{\mathrm{tr}}$ is asymptotically stable, $B_{\mathrm{tr}} \neq \0$, and $v(t)$ belongs to compact set $\mathcal{P}(\I,\delta_{\mathrm{tr}}\1)$. We assume that the parameters $(A_\mathrm{tr},B_{\mathrm{tr}},\delta_{\mathrm{tr}})$ are unknown, and we consider a dataset $\mathcal{D}:=\{x_{\mathrm{D}}(t),u_{\mathrm{D}}(t), t \in \mathbb{I}_0^{T}\}$ collected by performing experiments on the plant.} 
	%	
		In the sequel, we assume that the set $\Phi_{*}$ of all possible vectors $[x^{\top} \  u^{\top}]^{\top}$ that can be collected from the plant is a bounded. This is a natural consequence if $f_{\mathrm{tr}}$ is open-loop stable.
%		
%		use $\Phi_{*}$ to denote the set of all possible state-input vectors $[x^{\top} \  u^{\top}]^{\top}$ that can be collected from the plant, and assume that $\Phi_{*}$ is a bounded set.
		Then, $\Phi_T:=\{[x_{\mathrm{D}}(t)^{\top} \ u_{\mathrm{D}}(t))^{\top}]^{\top}, t \in \mathbb{I}_1^T\} \subset \Phi_{*}$. 
%		Since the plant is asymptotically stable and inputs $u,v$ are bounded, $\Phi_{*}$ is bounded.
		 Then, we define $	\scalemath{0.9}{
			\mathcal{J}_{*}:=\begin{Bmatrix}\begin{bmatrix}x \\ u \\ x_{+}\end{bmatrix} : \begin{matrix} x_{+} = f_{\mathrm{tr}}(x,u,v), \vspace{5pt} \\  \forall \ [x^{\top} \ u^{\top}]^{\top} \in \Phi_{*}, \ \forall v \in \mathcal{V}_{\mathrm{tr}}\end{matrix} \end{Bmatrix}.}$
		%		of all possible state-input-state transition vectors for the plant  as $\mathcal{J}_{*}:=\{[x^{\top} \ u^{\top} \ x^{\top}_{+}] :  x_{+} \in f_{\mathrm{tr}}(x,u,v),  \forall \ [x^{\top} \ u^{\top}]^{\top} \in \Phi_{*}, \forall \ v \in \mathcal{V}_{\mathrm{tr}} \},$
		%		\begin{align*}
		%			\scalemath{0.9}{
		%			\mathcal{J}_{*}:=\begin{Bmatrix}\begin{bmatrix}x \\ u \\ x_{+}\end{bmatrix} : \begin{matrix} x_{+} \in f_{\mathrm{tr}}(x,u,v), \vspace{5pt} \\  \forall \ [x^{\top} \ u^{\top}]^{\top} \in \Phi_{*}, \ \forall v \in \mathcal{V}_{\mathrm{tr}}\end{matrix} \end{Bmatrix}.}
		%		\end{align*}
		Since $\Phi_{*}$, $\mathcal{V}_{\mathrm{tr}}$ and $u$ are bounded, $\mathcal{J}_{*}$ is also a bounded set.
		%		Under the stability of dynamics and boundedness of inputs assumptions on the underlying plant, $\mathcal{J}_{*}$ is also a bounded set. 
		Finally, we denote the measured set $\mathcal{J}_{T}:=\{[x_{\mathrm{D}}(t)^{\top}\ u_{\mathrm{D}}(t)^{\top} \ x_{\mathrm{D}}(t+1)^{\top}]^{\top}, \ t \in \mathbb{I}_1^{T-1}\} \subset \mathcal{J}_*$.		
	\begin{remark}
		\label{rem:FB_performance}
		Given $(A,B,K)$, matrices $(H_{\mathrm{Q}},P_{\mathrm{Q}})$ satisfying the DARE can be computed using the procedure in \cite{zanon2020constrained}. Hence, our approach involves tuning the MPC scheme. 
		Combining our approach with other MPC tuning methods such as \cite{Masti2022} is a subject of future research.	
		$\hfill\square$
		%			Note that the matrices $(H_{\mathrm{Q}},P_{\mathrm{Q}})$ satisfying the DARE can then be computed from $(A,B,K)$ by using the approach in \cite{zanon2020constrained}. 
	\end{remark}
	%		}
	\vspace{-4pt}
	\section{Identification based on invariant sets}
	\vspace{-2pt}
	\label{sec:main_section}
	To compute $(A,B,\mathcal{W},K,\Delta \mathcal{X},\mathcal{X}_{\mathrm{t}})$ using the dataset $\mathcal{D}$ by optimization,  
	%following which we present a solution algorithm. 
	we first characterize the set of models $(A,B,\mathcal{W})$ that are suitable to model the underlying plant.
	%	 Then, we parameterize the sets $\Delta \mathcal{X}$ and $\mathcal{X}_{\mathrm{t}}$, and then formulate an identification criterion.
	%	
	%	 need to parameterize the sets $(\mathcal{W},\Delta \mathcal{X},\mathcal{X}_{\mathrm{t}})$, and then formulate an identification criterion. 
	%	We assume that the constraints can be represented as symmetric polytopes $\mathcal{X}=\mathcal{P}(V^x,v^x), v^x \in \R^{m_x}_+,$ and $\mathcal{U}=\mathcal{P}(V^u,v^u), v^u \in \R^{m_u}_+$.
			
		1) \textbf{\textit{ Characterization of feasible models}}: 
		%	\mario{why a box?}
		We consider a disturbance set parametrized as $\mathcal{W}:=\mathcal{P}(F,d), d \in \R_+^{m_w}$. We assume for simplicity that $F$ is fixed a priori.
		%		 with a scalar $\delta>0$ for simplicity of presentation.
		%	, and extension to general polytopes is a subject of future work. 
		%	\mario{maybe we can remove the second part of the sentence, or maybe state it is a simple extension?}
		Then, system \eqref{eq:system} with $w \in \mathcal{P}(F,d)$ is suitable for RMPC synthesis if it can model all possible state transitions of the plant as
		%	 
		%	 Then, we note that a model $(A,B,\delta)$ is suitable for RMPC synthesis if system \eqref{eq:system} can model every possible state transition of the plant as
		%	 if the robust inclusion	
		\begin{align}
			\label{eq:robust_prediction_inclusion}
			\scalemath{0.97}{
				\begin{matrix}
					%			\hspace{-50pt}
					\begin{matrix}f_{\mathrm{tr}}(x,u,v) \in \{Ax+Bu\}\oplus \mathcal{P}(F,d), \vspace{2pt} \\
						\forall \ [x^{\top} \ u^{\top}]^{\top} \in \Phi_{*}, \quad \forall \ v \in \mathcal{V}_{\mathrm{tr}}. \end{matrix} 
				\end{matrix}
			}
		\end{align}	
		Defining the prediction error $\scalemath{0.98}{\zeta(A,B,z):=x_+-Ax-Bu}$ with $\scalemath{0.98}{z:=[x^{\top} \ u^{\top} \ x_+^{\top}]^{\top}}$, \eqref{eq:robust_prediction_inclusion} holds for a given $(A,B)$ if and only if $\zeta(A,B,z) \in \mathcal{P}(F,d), \ \forall  \ z \in \mathcal{J}_*$
		%		\mario{with a generic polyhedron this would read $\Delta\zeta(A,B,z) \leq \delta, \ \forall  \ z \in \mathcal{J}_*$} 
		by definition of $\mathcal{J}_*$. Hence,
		$\Sigma_*:=\{(A,B,d):\zeta(A,B,z) \in \mathcal{P}(F,d), \ \forall \ z \in \mathcal{J}_*\}$ characterizes the set of all models $(A,B,d)$ satisfying \eqref{eq:robust_prediction_inclusion}.
		%
		% it is
		% satisfied by all models $(A,B,\delta) \in 
		However, constructing $\Sigma_*$ is not possible since we only have access to the measured set $\mathcal{J}_T$. 
		Hence, we characterize
			\begin{align*}
			\scalemath{0.93}{
				\Sigma_T(\theta_T):=\begin{Bmatrix}{\begin{pmatrix}A, \\B, \\ d\end{pmatrix}: \begin{matrix}\zeta(A,B,z) \in \mathcal{P}(F,d-\kappa_T(A,B)\1)\vspace{2pt} \\
							\scalemath{0.93}{\kappa_T(A,B)=\norm{F[-A \ -B \ \I]}_{\infty}\theta_T}, \vspace{2pt} \\
							d>\kappa_T(A,B)\1,\ \forall \ z \in \mathcal{J}_{T}  \end{matrix} }\end{Bmatrix}}
		\end{align*}
	  using $\mathcal{J}_T$, where $\theta_T:=\mathbf{d}_{\infty}(\mathcal{J}_{*},\mathcal{J}_T)$ is the Hausdorff distance between the sets $\mathcal{J}_T$ and $\mathcal{J}_*$ in $\infty$-norm, and is given by $\mathbf{d}_{\infty}(\mathcal{J}_*,\mathcal{J}_{{T}}):=\max_{z_*\in \mathcal{J}_*} \min_{z\in \mathcal{J}_{{T}}} \norm{z-z_*}_{\infty}$ since the inclusion $\mathcal{J}_T \subset \mathcal{J}_{\infty}$ holds for every $T>0$.
	  	\begin{assumption}
	  	\label{ass:sufficient_excitation_assumption}
	  	($a$) $\Sigma_T(\theta_T) \neq \emptyset$; ($b$)
	  	$\forall \ \theta \in \R^1_+, \exists \ \tilde{T}<\infty$ such that $\mathbf{d}_{\infty}(\mathcal{J}_*,\mathcal{J}_{\tilde{T}})\leq \theta$.
%	  	where the function $\mathbf{d}_{\infty}(\mathcal{J}_*,\mathcal{J}_{\tilde{T}}):=\max_{z_*\in \mathcal{J}_*} \min_{z\in \mathcal{J}_{\tilde{T}}} \norm{z-z_*}_{\infty}$ is the Hausdorff distance between the sets in $\infty$-norm.  $\hfill\square$
	  \end{assumption}
%		
%		
%		
%		a set $\Sigma_T \subset \Sigma_*$ of models $(A,B,d)$ satisfying \eqref{eq:robust_prediction_inclusion} using $\mathcal{J}_T$.
		%	We make the following assumption on $\mathcal{J}_T$.
		%	
		%	
		%	Then, any model $(A,B,\delta)$ belonging to the set
		%	\begin{align*}
		%		\scalemath{0.93}{
		%			\Sigma_*:=\begin{Bmatrix}(A,B,\delta): \begin{matrix}\norm{{x_+}-{A}x-{B}u}_{\infty} \leq \delta,\vspace{2pt} \\
		%					\forall \ [x^{\top} \ u^{\top} \ x_{+}^{\top}]^{\top} \in \mathcal{J}_{*} \end{matrix} \end{Bmatrix}}
		%	\end{align*}
		%	satisfies the condition in \eqref{eq:robust_prediction_inclusion} by construction. However, since we only have access to the subset $\mathcal{J}_T \subset \mathcal{J}_*$ from the plant, we characterize a set $\Sigma_T$ of models $(A,B,\delta)$ satisfying \eqref{eq:robust_prediction_inclusion} using $\mathcal{J}_T$ . We work under the following assumption on $\mathcal{J}_T$.
%		\begin{assumption}
%			\label{ass:sufficient_excitation_assumption}
%			($a$) $\mathcal{J}_T \neq \emptyset$; ($b$)
%			$\forall \ \theta \in \R^1_+, \exists \ \tilde{T}<\infty$ such that $\mathbf{d}_{\infty}(\mathcal{J}_*,\mathcal{J}_{\tilde{T}})\leq \theta$, where the function $\mathbf{d}_{\infty}(\mathcal{J}_*,\mathcal{J}_{\tilde{T}}):=\max_{z_*\in \mathcal{J}_*} \min_{z\in \mathcal{J}_{\tilde{T}}} \norm{z-z_*}_{\infty}$ is the Hausdorff distance between the sets in $\infty$-norm.  $\hfill\square$
%		\end{assumption}
		Assumption \ref{ass:sufficient_excitation_assumption} implies that as $T \to \infty$, the set $\mathcal{J}_*$ is densely covered by $\mathcal{J}_T$: this is an assumption on the persistence of excitation of inputs, and bound-exploring property of the disturbances acting on the underlying plant \cite[Section 3.2]{Terzi2019}.
		%		
		%		 the inputs and disturbances on the plant are persistently exciting, such that if more data points are added to $\mathcal{D}$, then $\mathcal{J}_T$ densely covers $\mathcal{J}_*$ \cite{Terzi2019}. \mario{this sentence is actually a bit imprecise}
		%	
		%	This assumption implies that if more points are added to the dataset $\mathcal{D}$, then the set $\mathcal{J}_*$ is densely covered by $\mathcal{J}_T$, i.e., $\mathbf{d}_{\infty}(\mathcal{J}_*,\mathcal{J}_T) \to 0$ as $T \to \infty$ \cite{Terzi2019}. This essentially implies that the inputs and disturbances are persistently exciting.	
		%	This is essentially an assumption on the persistence of excitation by the inputs and disturbances. 
		%	The following result then characterizes a set $\Sigma_T$ of $(A,B,\delta)$ satisfying \eqref{eq:robust_prediction_inclusion}.
		%	
		%	
		%	Under this assumption, we present the following result to characterize a set $\Sigma_T$ of $(A,B,\delta)$ satisfying \eqref{eq:robust_prediction_inclusion}.
		\begin{theorem}
			\label{thm:robust_model_theorem}
			%			\mario{.}
			If Assumption 1 holds, then \eqref{eq:robust_prediction_inclusion} holds for all models $\scalemath{1}{(A,B,d) \in \Sigma_T(\theta_T)}$. $ \hfill\square$
%			
%			 where $\theta_T:=\mathbf{d}_{\infty}(\mathcal{J}_{*},\mathcal{J}_T)$ and
%			%							\vspace{-2pt}
%			\begin{align*}
%				\scalemath{0.93}{
%					\Sigma_T(\theta_T):=\begin{Bmatrix}{\begin{pmatrix}A, \\B, \\ d\end{pmatrix}: \begin{matrix}\zeta(A,B,z) \in \mathcal{P}(F,d-\kappa_T(A,B)\1)\vspace{2pt} \\
%								\scalemath{0.93}{\kappa_T(A,B)=\norm{F[-A \ -B \ \I]}_{\infty}\theta_T}, \vspace{2pt} \\
%								d>\kappa_T(A,B)\1,\ \forall \ z \in \mathcal{J}_{T}  \end{matrix} }\end{Bmatrix}}. \hspace{-17pt}
%				\qquad \hfill\square
%			\end{align*}
			%			 with $\scalemath{0.9}{\theta_T:=\mathbf{d}_{\infty}(\mathcal{J}_{*},\mathcal{J}_T)}$ .
			\vspace{1pt}
		\end{theorem}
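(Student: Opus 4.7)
The plan is to unfold the definitions and exploit the fact that the prediction error $\zeta(A,B,z) = [-A \ -B \ \I] z$ is linear in $z$, so that any two evaluations of $\zeta$ differ only by the matrix $[-A \ -B \ \I]$ applied to $z_* - z$. Given any $(A,B,d) \in \Sigma_T(\theta_T)$ and any $z_* \in \mathcal{J}_*$, the aim is to show that $\zeta(A,B,z_*) \in \mathcal{P}(F,d)$, which by the symmetric polytope definition is equivalent to $|F_i \zeta(A,B,z_*)| \leq d_i$ for every row $i$.

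First, I would invoke the Hausdorff-distance definition of $\theta_T$: since $\mathcal{J}_T \subset \mathcal{J}_*$, the distance simplifies to $\theta_T = \max_{z_* \in \mathcal{J}_*} \min_{z \in \mathcal{J}_T} \|z - z_*\|_\infty$, so for the chosen $z_*$ there exists $z \in \mathcal{J}_T$ with $\|z_* - z\|_\infty \leq \theta_T$. Second, writing
\[
\zeta(A,B,z_*) = \zeta(A,B,z) + [-A \ -B \ \I](z_* - z),
\]
I would bound each row of $F \zeta(A,B,z_*)$ by the triangle inequality: the first term satisfies $|F_i \zeta(A,B,z)| \leq d_i - \kappa_T(A,B)$ because $(A,B,d) \in \Sigma_T(\theta_T)$ guarantees $\zeta(A,B,z) \in \mathcal{P}(F, d - \kappa_T(A,B)\1)$ for every $z \in \mathcal{J}_T$; the second term satisfies
\[
|F_i[-A \ -B \ \I](z_* - z)| \leq \|F[-A \ -B \ \I]\|_\infty \|z_* - z\|_\infty \leq \kappa_T(A,B),
\]
using the standard inequality $|(Mv)_i| \leq \|M\|_\infty \|v\|_\infty$ together with the definition of $\kappa_T$. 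Adding these two bounds yields $|F_i \zeta(A,B,z_*)| \leq d_i$, i.e., $\zeta(A,B,z_*) \in \mathcal{P}(F,d)$.

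Finally, since $z_* \in \mathcal{J}_*$ was arbitrary, the set inclusion $\zeta(A,B,z) \in \mathcal{P}(F,d)$ for all $z \in \mathcal{J}_*$ follows, which by definition of $\mathcal{J}_*$ is precisely condition \eqref{eq:robust_prediction_inclusion}. The positivity requirement $d > \kappa_T(A,B)\1$ in the definition of $\Sigma_T(\theta_T)$ ensures $\mathcal{P}(F, d - \kappa_T(A,B)\1)$ is a well-defined nonempty polytope, and part ($a$) of Assumption \ref{ass:sufficient_excitation_assumption} ensures the whole statement is nonvacuous. The only subtle step is the row-wise $\infty$-norm bookkeeping: one must resist the temptation to bound $|F_i[-A \ -B \ \I](z_*-z)|$ row by row with a row-dependent constant and instead observe that the uniform bound $\|F[-A \ -B \ \I]\|_\infty$ used in the definition of $\kappa_T$ is exactly what is needed to absorb the mismatch between $\mathcal{J}_T$ and $\mathcal{J}_*$ for every row simultaneously. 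I do not foresee a genuine obstacle beyond this careful bookkeeping.
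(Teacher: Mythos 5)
Your proposal is correct and follows essentially the same route as the paper's proof: the same add-and-subtract decomposition $\zeta(A,B,z_*)=\zeta(A,B,z)+[-A\ \ -B\ \ \I](z_*-z)$, the same use of the Hausdorff distance to pick $z\in\mathcal{J}_T$ with $\|z_*-z\|_\infty\le\theta_T$, and the same $\infty$-norm bound absorbed by $\kappa_T(A,B)$. The only cosmetic differences are that you treat all of $\mathcal{J}_*$ uniformly rather than splitting off $\mathcal{J}_T$ first, and you handle both signs at once via $|F_i\,\cdot\,|$ where the paper does the upper and lower bounds separately.
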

		\begin{proof}
%			$\Sigma_T(\theta_T) \neq \emptyset$ under Assumption~\ref{ass:sufficient_excitation_assumption}($a$). 
			We show that $\zeta(A,B,z) \in \mathcal{P}(F,d), \forall z \in \mathcal{J}_* , \forall (A,B,d) \in \Sigma_T(\theta_T)$.  For any $(A,B,d) \in \Sigma_T(\theta_T)$, clearly $\zeta(A,B,z) \in \mathcal{P}(F,d-\kappa_T(A,B)\1) \subset \mathcal{P}(F,d), \forall z \in \mathcal{J}_T$. By definition of the Hausdorff distance, for every remaining $\bar{z} \in \mathcal{J}_{*} \setminus \mathcal{J}_T:=\{\tilde{z}:\tilde{z} \in \mathcal{J}_{*}, \tilde{z} \notin \mathcal{J}_T\}$, $\exists \ z \in \mathcal{J}_T$ such that $||z-\bar{z}||_{\infty} \leq \theta_T$. Then, for any $(A,B,d)\in \Sigma_T(\theta_T)$,
			%	Moreover, by definition of Hausdorff distance, there exists some $z \in \mathcal{J}_T$ for every $\bar{z} \in \mathcal{J}_{*} \setminus \mathcal{J}_T:=\{z:z \in \mathcal{J}_{*}, z \notin \mathcal{J}_T\}$ such that $||z-\bar{z}||_{\infty} \leq \theta_T$. 
			%			For any $(A,B,d)\in \Sigma_T(\theta_T)$, we write
			\begin{align*}
				\scalemath{0.9}{
					\hspace{8pt}
					\begin{matrix}
						\hspace{-40pt}F\zeta(A,B,\bar{z})= F(\zeta(A,B,\bar{z})-\zeta(A,B,{z})+\zeta(A,B,{z}))\vspace{1pt}\\
						\hspace{14pt}\leq
						F[-A \ -B \ \I](\bar{z}-{z})+d-\kappa_T(A,B)\1 \vspace{1pt}\\ 
						%							\hspace{35pt}\leq
						%						||F[-A \ -B \ \I](z-\bar{z})||_{\infty}\1+d-\kappa_T(A,B)\1 \\
						\hspace{41pt}
						\leq
						||F[-A \ -B \ \I]||_{\infty}\theta_T\1+d-\kappa_T(A,B)\1  = d,
				\end{matrix}}
			\end{align*}
			where the second step follows from definition of $\Sigma_T(\theta_T)$, 
			%			third step from properties of norms, 
			and third step from the definition of $\infty$-norm, the Cauchy-Schwarz inequality and $||\bar{z}-{z}||_{\infty} \leq \theta_T$. Using similar arguments, the condition $-d \leq F\zeta(A,B,\bar{z})$ follows, thus concluding that $\zeta(A,B,\bar{z}) \in \mathcal{P}(F,d), \forall \ \bar{z} \in \mathcal{J}_*\setminus \mathcal{J}_T$.
		\end{proof}	
		%The condition $\delta<\infty$ in $\Sigma_T$ enforces boundedness of  $\Sigma_T$.
		%
		%enforces that $\Sigma_T$ is a bounded set. If this condition is not satisfied, then it implies that the dataset $\mathcal{D}$ does not have enough datapoints to compute a bounded set $\Sigma_T$. 
		
		Theorem \ref{thm:robust_model_theorem}
		%		\mario{($a$) does not exist anymore}
		implies that every $(A,B,d) \in \Sigma_T(\theta_T)\subset \Sigma_*$ is a feasible model for RMPC synthesis. However, $\Sigma_T(\theta_T)$ cannot be constructed from data since $\theta_T$ is unknown.
		To tackle this issue, we follow the standard approach of inflating the disturbance set using some parameter (e.g., \cite{Terzi2019}):
		%		 
		%		  is to inflate the disturbance set using a parameter to account for the unseen data \cite{Terzi2019}. Hence, 
		we propose to select some $\hat{\theta}_T>0$, and approximate 
		$\Sigma_T(\theta_T)$ with $\hat{\Sigma}_T:=\Sigma_T(\hat{\theta}_T)$ under the following assumption.
		\begin{assumption}
			\label{ass:K_hat_ass}
			$\hat{\theta}_T \geq \theta_T=\mathbf{d}_{\infty}(\mathcal{J}_{*},\mathcal{J}_T).$ 
			$\hfill\square$
		\end{assumption}
				Under Assumption \ref{ass:K_hat_ass}, we have $\hat{\Sigma}_T \subseteq \Sigma_T({\theta}_T)$. Hence, every $(A,B,d) \in \hat{\Sigma}_T$ is suitable for RMPC synthesis. In the sequel, we assume that a $\hat{\theta}_T$ satisfying Assumption \ref{ass:K_hat_ass} is selected. We then encode $\hat{\Sigma}_T$ as the set of linear constraints
				\begin{align*}
					\scalemath{0.95}{
						\hat{\Sigma}_T=\begin{Bmatrix}{\begin{pmatrix}A, \\ B, \\ d\end{pmatrix}: \begin{matrix} \zeta(A,B,z) \in \mathcal{P}(F,d-\lambda \hat{\theta}_T\1), d >\lambda \hat{\theta}_T\1,  \vspace{2pt} \\
									-\mathcal{Z}\leq F[-A \ -B \ \I]\leq \mathcal{Z},\mathcal{Z} \geq \0, \\	{\scalemath{0.97}{\Sigma_{j=1}^{2n_x+n_u}\mathcal{Z}_{ij}}} \leq \lambda, \forall \ i \in \mathbb{I}_1^{m_w}, 
									\forall \ z \in \mathcal{J}_{T}  \end{matrix} }\end{Bmatrix}}
				\end{align*} 
				using the definition of $\infty$-norm for matrices, where $\mathcal{Z} \in \R_+^{m_w \times (2n_x+n_u)}$ is a slack variable matrix.
		%				using standard procedures for matrix $\infty$-norms (e.g., \cite[Theorem 4.6]{Lazar2010}), 
		%				where $\sum_{j=1}^{n_x} \pm A_{ij}+\sum_{k=1}^{n_u}\pm B_{ik}\leq \lambda-1$ is equivalent to $\{[A_{i} \ B_{i}]\mathrm{y} \leq \lambda-1, \forall \ \mathrm{y} \in \mathrm{vert}(\mathcal{P}(\I_{n_x+n_u},\1))\}$, with 
		%where $\mathrm{vert}(\mathcal{P}(\I_{2n_x+n_u},\1))$ is set of all vertices of the $\infty$-norm ball in $\R^{2n_x+n_u}$ with cardinality $2^{2n_x+n_u}$.}}
		%		\begin{align*}
		%			\scalemath{0.9}{
		%			\begin{matrix}\pm A_{ij} \pm B_{ik} \leq \lambda-1\end{matrix} \text{ iff } \begin{matrix}\begin{bmatrix} 1 & 1 & -1 & -1 \\ 1 & -1 & 1 & -1\end{bmatrix}^{\top} \begin{bmatrix} A_{ij} \\ B_{ik} \end{bmatrix}  \hspace{0pt} \leq (\lambda-1)\1.\end{matrix}
		%		}
		%		\end{align*}
		%		
		%		A_{ij} + B_{ik} + 1 \leq \lambda, A_{ij} - B_{ik} + 1 \leq \lambda, -A_{ij} - B_{ik} + 1 \leq \lambda, -A_{ij} + B_{ik} + 1 \leq \lambda$$
		%		using standard procedures (). 
		%		
		%		set of linear inequalities on $(A,B,\delta)$ using standard procedures (For example, . We skip further details here due to space limitations.
		%		
		We reiterate that since Assumption \ref{ass:K_hat_ass} cannot be verified directly using data, robustness guarantees with respect to the underlying plant can only be provided in theory. However, if Assumption~\ref{ass:sufficient_excitation_assumption}($b$) holds, the distance $\theta_T \to 0$ for large $T$.
		Hence, guessing some $\hat{\theta}_T \approx 0$
		%		 \mario{this symbol is weird...} 
		can satisfy Assumption \ref{ass:K_hat_ass} for large datasets. 
		%A reasonable heuristic is to select $\hat{\kappa}$ small enough that there exists a model $(A,B,\delta) \in \hat{\Sigma}_T$ explaining a validation dataset. 
		Moreover, the validity of a given $\hat{\theta}_T$ can be checked by verifying the existence of a model $(A,B,d) \in \hat{\Sigma}_T$ explaining a validation dataset. On the other hand, computation of a $\hat{\theta}_T$ satisfying Assumption \ref{ass:K_hat_ass} is a fundamental issue in data-driven methods: while statistical techniques such as, e.g., bootstrapping can be used, the development of such methods is a future research subject.
		\begin{remark}
			% ($a$) Establishing monotonicity results on $\kappa_T(A,B,\delta)$ is a subject of future research.
			%	($b$) A simple approach to assess validity of a given $\hat{\kappa}$ is to test for the existence of a model $(A,B,\delta) \in \hat{\Sigma}_T$ explaining a validation dataset. More complex statistical approaches are subjects of future research. 
			($a$) In \cite{Terzi2019}, an optimal LTI model set is first computed, from which a model is selected and then feedback controllers are synthesized. 
			%	
			%	 model is selected that provides smallest possible prediction error. Feedback controllers are then synthesized for this model. 
			We combine all three phases in the current work;
			%	
			%	
			%	the model selection and controller synthesis phases are decoupled, by first computing an optimal model $(A,B,\delta)$
			%	
			%	an optimal parameter set of $(A,B)$ is first computed, using which a prediction error bound $\delta$  is characterized for a given $(A,B)$.
			%%	 and then a prediction error bound $\delta$ is characterized for a given $(A,B)$ in that set. 
			%	 Instead, we directly characterize a set $\Sigma_T$ of $(A,B,\delta)$. 
			($b$) In \cite{Berberich2020}, the closed-loop dynamics of an unknown LTI plant with a known disturbance set is characterized in terms of the measured dataset, and parametrized by unknown but bounded disturbance sequences. Then, a controller is synthesized for all feasible LTI models. We instead use a model-dependent disturbance set.
			%			an unknown LTI plant with a known disturbance set is considered. Then, the closed-loop dynamics are  characterized in terms of the collected dataset and parametrized by  disturbance sequences, and a robust controller for all feasible linear models is computed.
			While the assumption of a known disturbance set is as strict as Assumption \ref{ass:K_hat_ass}, comparison with \cite{Berberich2020} is a subject of future research. 
			%	 
			%	  We note that checking the validity of the provided disturbance set is as difficult as verifying Assumption \ref{ass:K_hat_ass}; 
			%	 
			%	  the underlying plant is assumed to be LTI, and the disturbance set is assumed to be known. These assumptions are equally as strong as Assumption \ref{ass:K_hat_ass} in our case. 
			%	  
			%	 \mario{point out that this is as strong as Assumption2} 
			%	 
			%	  the closed-loop dynamics of the plant are directly characterized in terms of the collected dataset, and parameterized by disturbance sequences.  
			%	 
			%	 the set of disturbances acting on the underlying plant are assumed to be known. Then, instead of characterizing a set $\Sigma_T$, the closed-loop dynamics are parametrized in terms of the feasible disturbance sequences, for which robust controllers are designed. 
			$\hfill\square$
		\end{remark}

		2)\textbf{ \textit{Robust PI set design:}} We will now compute a feedback gain $K$ and corresponding invariant sets $\Delta \mathcal{X}$ and $\mathcal{X}_{\mathrm{t}}$ for some $(A,B,d) \in \hat{\Sigma}_T$. To this end, we parametrize the RPI set
	as $\Delta \mathcal{X}=\mathcal{P}(\munderbar{P},\munderbar{b}), \munderbar{b} \in \R_+^{\munderbar{m}}$, the PI terminal set as $\mathcal{X}_{\mathrm{t}}=\mathcal{P}(\bar{P},\bar{b}), \bar{b} \in \R_+^{\bar{m}}$, and
	%Then, assuming that matrix $F$ parameterizing $\mathcal{W}$ is known a priori, we define the set of models $({A},{B},{d})$ consistent with the dataset $\mathcal{D}$ as
	%	\begin{equation*}
	%		%	\label{eq:FPS}
	%		\scalemath{0.95}{
	%			\Sigma:=\begin{Bmatrix}({A},{B},{d}):\hspace{-12pt}\begin{matrix}
	%					&x_{\mathcal{D}}(t+1)={A}x_{\mathcal{D}}(t)+{B}u_{\mathcal{D}}(t)+w(t), \\ &w(t) \in \mathcal{P}(F,{d}), \ d>\0,\ \forall \ t \in \mathbb{I}_0^{T-1} \end{matrix}
	%		\end{Bmatrix}}.
	%	\end{equation*}
	%as the set of models $({A},{B},{d})$ consistent with the dataset $\mathcal{D}$. 
	assume that the constraint sets are $\mathcal{X}=\mathcal{P}(V^x,v^x), v^x \in \R^{m_x}_+,$ and $\mathcal{U}=\mathcal{P}(V^u,v^u), v^u \in \R^{m_u}_+$.
	Then, for some $(A,B,d) \in \hat{\Sigma}_T$, if $(K,\munderbar{P},\munderbar{b},\bar{P},\bar{b})$ satisfies
	%
	%if the variables $(A,B,d,K,\munderbar{P},\munderbar{b},\bar{P},\bar{b})$ satisfy
	\vspace{-3pt}
	\begin{subequations}
		\label{eq:main_constraints}
		\begin{align}
			\scalemath{0.95}{
				%	(A,B,d) \in \Sigma, \label{eq:feasible_model}\\
				(A+BK)\mathcal{P}(\munderbar{P},\munderbar{b})\oplus \mathcal{P}(F,d) \subseteq \mathcal{P}(\munderbar{P},\munderbar{b}), \label{eq:RPI_inclusion_p}} \\
			\scalemath{0.95}{
				(A+BK)\mathcal{P}(\bar{P},\bar{b}) \subseteq \mathcal{P}(\bar{P},\bar{b}), \label{eq:MPI_inclusion_p} }\\
			\scalemath{0.95}{	\mathcal{P}(\munderbar{P},\munderbar{b}) \oplus \mathcal{P}(\bar{P},\bar{b}) \subseteq \mathcal{X}, \label{eq:state_inclusion_p} }\\
			\scalemath{0.95}{	K\mathcal{P}(\munderbar{P},\munderbar{b}) \oplus K\mathcal{P}(\bar{P},\bar{b}) \subseteq \mathcal{U}, \label{eq:input_inclusion_p} }
		\end{align}
	\end{subequations}
	%	\begin{subequations}\small
	%		\label{eq:main_constraints}
	%		\begin{align}
	%			\scalemath{1.05}{
	%			\eqref{eq:RPI_property} \impliedby	(A+BK)\mathcal{P}(\munderbar{P},\munderbar{b})\oplus \mathcal{P}(F,d) \subseteq \mathcal{P}(\munderbar{P},\munderbar{b}), \label{eq:RPI_inclusion_p}} \\
	%			\scalemath{1.05}{
	%				\eqref{eq:PI_required} \impliedby	(A+BK)\mathcal{P}(\bar{P},\bar{b}) \subseteq \mathcal{P}(\bar{P},\bar{b}), \label{eq:MPI_inclusion_p} }\\
	%			\scalemath{1.05}{	\mathcal{P}(\munderbar{P},\munderbar{b}) \oplus \mathcal{P}(\bar{P},\bar{b}) \subseteq \mathcal{X}, \label{eq:state_inclusion_p} }\\
	%			\scalemath{1.05}{	K\mathcal{P}(\munderbar{P},\munderbar{b}) \oplus K\mathcal{P}(\bar{P},\bar{b}) \subseteq \mathcal{U}, \label{eq:input_inclusion_p} }
	%		\end{align}
	%	\end{subequations}
	it can be used to synthesize the RMPC scheme
	%	 {\color{blue}
	(since \eqref{eq:RPI_inclusion_p} implies \eqref{eq:RPI_property}, \eqref{eq:MPI_inclusion_p} implies \eqref{eq:PI_required}, \eqref{eq:state_inclusion_p}-\eqref{eq:input_inclusion_p} imply the constraint inclusions in \eqref{eq:smallRPI}-\eqref{eq:PI_required}).
	% 	} 
	%	{\color{magenta}
	%		\begin{remark}
	%			A result parametrizing the matrix $(A+BK)$ in terms of the dataset $\mathcal{J}_T$ and feasible disturbance sequences was presented is  \cite{Berberich2020}. This result can be used to formulate  \eqref{eq:RPI_inclusion_p} and \eqref{eq:MPI_inclusion_p} directly in terms of the dataset, and enforce invariance for all possible disturbance sequences.  $\hfill\square$
	%		\end{remark}
	%	}
	We encode \eqref{eq:RPI_inclusion_p}-\eqref{eq:input_inclusion_p} using Theorem~\ref{thm:inclusion_result}.
	%	 
	%	 To encode inclusions~\eqref{eq:main_constraints}, %\eqref{eq:RPI_inclusion_p}-\eqref{eq:input_inclusion_p},
	%	 we recall Theorem~\ref{thm:inclusion_result}.
	%	
	%	
	%	(since \eqref{eq:RPI_inclusion_p} follows from \eqref{eq:RPI_property}, \eqref{eq:MPI_inclusion_p} from \eqref{eq:PI_required}, and the constraint inclusions in \eqref{eq:smallRPI}-\eqref{eq:PI_required} are satisfied if \eqref{eq:state_inclusion_p}-\eqref{eq:input_inclusion_p} hold).
	%	To encode the inclusions~\eqref{eq:main_constraints}, %\eqref{eq:RPI_inclusion_p}-\eqref{eq:input_inclusion_p},
	%	we recall Theorem~\ref{thm:inclusion_result}.
	
	%	 the following result from \cite{Liu2019}. %formulated using the Farkas' theorem \cite{Terlaky2007}.
	\begin{theorem}{\cite[Theorem 2]{Liu2019}}
		\label{thm:inclusion_result} For some $C \in \R^{n \times n^c}$, $M^c \in \R^{m^c \times n}$, $b^c \in \R^{m^c}_+$, $M^0 \in \R^{m^0 \times n}$, $b^o \in \R^{m^o}_+$, the
		inclusion $C \mathcal{P}(M^c,b^c) \subseteq \mathcal{P}(M^0,b^0)$
		%	 with dimensions $C \in \R^{n \times n^c}$, $b^c \in \R^{m^c}_+$, $D \in \R^{n \times n^d}$, $b^d \in \R^{m^d}_+$, $b^0 \in \R^{m^0}_+$ 
		holds if $\forall \ i \in \mathbb{I}_1^{m^0}, \exists L^c_{[i]} \in \mathbb{D}_+^{m^c}$ such that $\begin{bmatrix}
			2b_i^0 - b^{c^{\top}} L^c_{[i]} b^c &\hspace{-12pt} M^0_i C \\
			* &\hspace{-12pt} M^{c^{\top}} L^c_{[i]} M^c 
		\end{bmatrix} \succ 0.$ %$\begin{bmatrix}
		%			2b_i^0 - b^{c^{\top}} L^c_{[i]} b^c &\hspace{-12pt} M^0_i C \\
		%			* &\hspace{-12pt} M^{c^{\top}} L^c_{[i]} M^c 
		%		\end{bmatrix} \succeq 0.$ 
		%		\begin{align*}
		%			\begin{bmatrix}
		%				2b_i^0 - b^{c^{\top}} L^c_{[i]} b^c &\hspace{-12pt} M^0_i C \\
		%				* &\hspace{-12pt} M^{c^{\top}} L^c_{[i]} M^c 
		%			\end{bmatrix} \succ 0. && \qquad \qquad \hfill\square
		%		\end{align*}  
		%		\begin{align*}
		%			\hspace{-3pt}\scalemath{0.88}{
		%				\begin{bmatrix}
		%					2b_i^0 - b^{c^{\top}} L^c_{[i]} b^c &\hspace{-12pt} M^0_i C \\
		%					* &\hspace{-12pt} M^{c^{\top}} L^c_{[i]} M^c 
		%				\end{bmatrix} \succeq 0}. \hspace{4pt}\hfill\square \\
		%		\end{align*} 
		% 	According to the Farkas' theorem \cite{Terlaky2007}, 
		$	\hfill\square $
	\end{theorem}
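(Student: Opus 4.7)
My plan is to derive the set inclusion $C\mathcal{P}(M^c, b^c) \subseteq \mathcal{P}(M^0, b^0)$ from the LMI via a quadratic S-procedure that is tailored to diagonal multipliers. For a fixed index $i \in \mathbb{I}_1^{m^0}$, I would pre- and post-multiply the $2\times 2$ block matrix by the row vector $[1\ x^\top]$ and its transpose, where $x$ is arbitrary. Since the block matrix is strictly positive definite and the probing vector is non-zero (the first component is $1$), this yields the scalar inequality
\[ 2 b_i^0 - b^{c\top} L^c_{[i]} b^c + 2 M^0_i C x + x^\top M^{c\top} L^c_{[i]} M^c x > 0, \]
valid for every $x \in \mathbb{R}^{n^c}$, and in particular for every $x \in \mathcal{P}(M^c, b^c)$.

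Next I would exploit the diagonality of $L^c_{[i]} = \mathrm{diag}(\lambda_1, \ldots, \lambda_{m^c})$ with $\lambda_j > 0$. The two quadratic forms then decouple into $b^{c\top} L^c_{[i]} b^c = \sum_{j=1}^{m^c} \lambda_j (b^c_j)^2$ and $x^\top M^{c\top} L^c_{[i]} M^c x = \sum_{j=1}^{m^c} \lambda_j (M^c_j x)^2$, so the inequality rearranges as
\[ 2 b_i^0 + 2 M^0_i C x > \sum_{j=1}^{m^c} \lambda_j \bigl( (b^c_j)^2 - (M^c_j x)^2 \bigr). \]
For every $x \in \mathcal{P}(M^c, b^c)$ the row-wise bound $|M^c_j x| \leq b^c_j$ makes each summand on the right nonnegative, so dividing by $2$ gives $M^0_i C x > -b_i^0$, i.e., the $i$-th lower facet of $\mathcal{P}(M^0, b^0)$ is satisfied strictly.

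Finally, I would close the argument by using the central symmetry of $\mathcal{P}(M^c, b^c)$: since $-x \in \mathcal{P}(M^c, b^c)$ whenever $x$ is, applying the bound above to $-x$ yields $M^0_i C x < b_i^0$, i.e., the $i$-th upper facet. Ranging over all $i \in \mathbb{I}_1^{m^0}$ establishes $-b^0 \leq M^0(Cx) \leq b^0$, hence $Cx \in \mathcal{P}(M^0, b^0)$, which is the desired inclusion.

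I do not expect serious difficulties here: the most delicate point is recognizing that restricting the multiplier to the diagonal cone $\mathbb{D}_+^{m^c}$ (rather than to the larger PSD cone) is precisely what makes the S-procedure relaxation collapse into a facet-wise inequality for a symmetric polytope, and that the central symmetry of the $\mathcal{P}(\cdot,\cdot)$ parametrization halves the work by letting one facet imply its opposite. The remaining steps are routine block-matrix manipulations and sign tracking, and the strict inequality obtained from $\succ 0$ in fact gives the stronger conclusion that $C\mathcal{P}(M^c, b^c)$ lies in the open interior of $\mathcal{P}(M^0, b^0)$.
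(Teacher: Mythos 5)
Your proof is correct. Note that the paper itself offers no proof of this statement---it is imported verbatim from \cite[Theorem 2]{Liu2019}---but your argument (congruence of the LMI with $[1 \ x^\top]$, decoupling the diagonal multiplier $L^c_{[i]}$ into a facet-wise sum $\sum_j \lambda_j((b^c_j)^2-(M^c_j x)^2)\geq 0$, and invoking the central symmetry of $\mathcal{P}(M^c,b^c)$ to obtain the opposite facet for free) is precisely the standard S-procedure derivation underlying that cited result, so there is nothing to add; your closing observation that strictness of the LMI places $C\mathcal{P}(M^c,b^c)$ in the interior is consistent with the paper's own remark that $\succ$ buys only sufficiency while $\succeq$ would be necessary and sufficient.
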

	\vspace{2pt}
	%\newline
	% \\ \vspace{5pt}
	%	\begin{theorem}{\cite{Liu2019}}
	%		\label{thm:inclusion_result} Given $C \in \R^{n \times n^c}$, $b^c \in \R^{m^c}_+$, $D \in \R^{n \times n^d}$, $b^d \in \R^{m^d}_+$, $b^0 \in \R^{m^0}_+$, 
	%		the polytopic set inclusion $C \mathcal{P}(M^c,b^c) \oplus D \mathcal{P}(M^d,b^d)\subseteq \mathcal{P}(M^0,b^0)$
	%		%	 with dimensions $C \in \R^{n \times n^c}$, $b^c \in \R^{m^c}_+$, $D \in \R^{n \times n^d}$, $b^d \in \R^{m^d}_+$, $b^0 \in \R^{m^0}_+$ 
	%		holds if and only if $\forall i \in \mathbb{I}_1^{m^0}, \exists L^c_{[i]} \in \mathbb{D}_+^{m^c}, L^d_{[i]} \in \mathbb{D}_+^{m^d}$ satisfying
	%		\begin{align*}
	%			\hspace{-3pt}\scalemath{0.88}{
	%				\begin{bmatrix}
	%					2b_i^0 - b^{c^{\top}} L^c_{[i]} b^c-b^{d^{\top}} L^d_{[i]} b^d &\hspace{-12pt} M^0_i C & \hspace{-12pt} M^0_i D \\
	%					* &\hspace{-12pt} M^{c^{\top}} L^c_{[i]} M^c &\hspace{-12pt} \0 \\
	%					* &\hspace{-12pt} * &\hspace{-10pt} M^{d^{\top}} L^d_{[i]} M^d
	%				\end{bmatrix} \succeq 0}. \hspace{4pt}\hfill\square \\
	%		\end{align*} 
	%		% 	According to the Farkas' theorem \cite{Terlaky2007}, 
	%	\end{theorem}
	Hence,  $\eqref{eq:RPI_inclusion_p} \impliedby \forall i \in \mathbb{I}_1^{\munderbar{m}}, \exists  {\munderbar{D}}_{[i]} \in \D_+^{\munderbar{m}},{W}_{[i]} \in \D_+^{m_w}$ s.t.
	%As per Theorem \ref{thm:inclusion_result}, the RPI inclusion in \eqref{eq:RPI_inclusion_p} holds if and only if $\forall \ i \in \mathbb{I}_1^{\munderbar{m}}, \exists \ {\munderbar{D}}_{[i]} \in \D_+^{\munderbar{m}},{W}_{[i]} \in \D_+^{m_w}$ such that
	%
	% we recall from \cite{Liu2019} that the RPI condition $({A}+{B}{K})\mathcal{P}(\munderbar{P},{\munderbar{b}})\oplus \mathcal{P}(\I,{d}) \subseteq \mathcal{P}(\munderbar{P},{\munderbar{b}})$ from \eqref{eq:RPI_property} is satisfied if and only if
	\vspace{-6pt}
	\begin{equation}
		\label{eq:mRPI_NL}
		\scalemath{0.875}{
			\hspace{-15pt}
			\begin{matrix}
				%			&\forall \ i \in \mathbb{I}_1^{\munderbar{m}}, \quad \exists \ {\munderbar{D}}_i \in \D_+^{\munderbar{m}},{W}_i \in \D_+^{{n}_x},\text{ such that }\vspace{5pt}\\
				&\begin{bmatrix}2{\munderbar{b}}_i - {\munderbar{b}}^{\top} {\munderbar{D}}_{[i]} {\munderbar{b}} - d^{\top}{W}_{[i]}d &\hspace{-5pt} {\munderbar{P}}_i &\hspace{-5pt} {\munderbar{P}}_i({A}+{B}{K}) \vspace{2pt}\\
					* &\hspace{-5pt} F^{\top}{W}_{[i]}F &\hspace{-5pt} \0 \\ * &\hspace{-5pt} * &\hspace{-5pt} {\munderbar{P}}^{\top} {\munderbar{D}}_{[i]} {\munderbar{P}} 
				\end{bmatrix} \succ 0,
		\end{matrix}}
	\end{equation}
	%\vspace{-6pt}
	%
	%\begin{equation}
	%	\label{eq:mRPI_NL}
	%	\scalemath{0.92}{
	%		\hspace{-15pt}
	%		\begin{matrix}
	%			%			&\forall \ i \in \mathbb{I}_1^{\munderbar{m}}, \quad \exists \ {\munderbar{D}}_i \in \D_+^{\munderbar{m}},{W}_i \in \D_+^{{n}_x},\text{ such that }\vspace{5pt}\\
	%			&\begin{bmatrix}2{\munderbar{b}}_i - {\munderbar{b}}^{\top} {\munderbar{D}}_i {\munderbar{b}} - {d}^{\top} {W}_i {d} & {\munderbar{P}}_i & {\munderbar{P}}_i({A}+{B}{K}) \vspace{2pt}\\
	%				* &  F^{\top}{W}_iF & \0 \\ * & * & {\munderbar{P}}^{\top} {\munderbar{D}}_i {\munderbar{P}} 
	%			\end{bmatrix} \succ 0.
	%	\end{matrix}}
	%\end{equation}
	$\eqref{eq:MPI_inclusion_p} \impliedby \forall \ i \in \mathbb{I}_1^{\bar{m}}, \exists \ {\bar{D}}_{[i]} \in \D_+^{\bar{m}}$ s.t.
	%Similarly, the PI inclusion in \eqref{eq:MPI_inclusion_p} holds if and only if 
	%$\forall \ i \in \mathbb{I}_1^{\bar{m}}, \exists \ {\bar{D}}_{[i]} \in \D_+^{\bar{m}}$ such that
	%
	% we recall from \cite{Liu2019} that the PI condition $({A}+{B}{K})\mathcal{P}(\bar{P},{\bar{b}})\subseteq \mathcal{P}(\bar{P},{\bar{b}})$ from \eqref{eq:PI_required} is satisfied if and only if
	\begin{equation}
		\label{eq:MPI_NL}
		\scalemath{0.87}{
			\hspace{-15pt}
			\begin{matrix}
				%		&\forall \ i \in \mathbb{I}_1^{\bar{m}}, \quad \exists \ {\bar{D}}_i \in \D_+^{\bar{m}},\text{ such that }\vspace{5pt}\\
				&\begin{bmatrix}2{\bar{b}}_i - {\bar{b}}^{\top} {\bar{D}}_{[i]} {\bar{b}} & \begin{matrix}{\bar{P}}_i({A}+{B}{K})\end{matrix} \\
					* & \begin{matrix} {\bar{P}}^{\top} {\bar{D}}_{[i]} {\bar{P}} \end{matrix}
				\end{bmatrix} \succ 0,
		\end{matrix}}
	\end{equation}
	$\eqref{eq:state_inclusion_p} \impliedby \forall \ i \in \mathbb{I}_1^{m_x}, \exists \ {\munderbar{S}}_{[i]} \in \D_+^{\munderbar{m}},{\bar{S}}_{[i]} \in \D_+^{\bar{m}}$ s.t.
	%Finally, the state constraint inclusion in \eqref{eq:state_inclusion_p} holds if and only if $\forall \ i \in \mathbb{I}_1^{m_x}, \exists \ {\munderbar{S}}_{[i]} \in \D_+^{\munderbar{m}},{\bar{S}}_{[i]} \in \D_+^{\bar{m}}$ such that
	\begin{equation}
		\label{eq:state_feasible_NL}
		\scalemath{0.87}{
			\hspace{-15pt}
			\begin{matrix}
				%		&\forall \ i \in \mathbb{I}_1^{m_x}, \quad \exists \ {\munderbar{S}}_i \in \D_+^{\munderbar{m}},{\bar{S}}_i \in \D_+^{\bar{m}},\text{ such that }\vspace{5pt}\\
				&\begin{bmatrix}2v^x_i - {\munderbar{b}}^{\top} {\munderbar{S}}_{[i]} {\munderbar{b}} - {\bar{b}}^{\top} {\bar{S}}_{[i]} {\bar{b}} & V^x_i & V^x_i \vspace{2pt}\\
					* &  {\munderbar{P}}^{\top} {\munderbar{S}}_{[i]} {\munderbar{P}} & \0 \\ * & * & {\bar{P}}^{\top} {\bar{S}}_{[i]} {\bar{P}} 
				\end{bmatrix} \succ 0,
		\end{matrix}}
	\end{equation}
	$\eqref{eq:input_inclusion_p} \impliedby \forall \ i \in \mathbb{I}_1^{m_u}, \exists \ {\munderbar{R}}_{[i]} \in \D_+^{\munderbar{m}},{\bar{R}}_{[i]} \in \D_+^{\bar{m}}$ s.t.
	%and the input constraint inclusion in \eqref{eq:input_inclusion_p} holds 
	%if and only if $\forall \ i \in \mathbb{I}_1^{m_u}, \exists \ {\munderbar{R}}_{[i]} \in \D_+^{\munderbar{m}},{\bar{R}}_{[i]} \in \D_+^{\bar{m}}$ such that
	\begin{equation}
		\label{eq:input_feasible_NL}
		\scalemath{0.87}{
			\hspace{-15pt}
			\begin{matrix}
				%			&\forall \ i \in \mathbb{I}_1^{m_u}, \quad \exists \ {\munderbar{R}}_i \in \D_+^{\munderbar{m}},{\bar{R}}_i \in \D_+^{\bar{m}},\text{ such that }\vspace{5pt}\\
				&\begin{bmatrix}2v^u_i - {\munderbar{b}}^{\top} {\munderbar{R}}_{[i]} {\munderbar{b}} - {\bar{b}}^{\top} {\bar{R}}_{[i]} {\bar{b}} & V^u_i{K} & V^u_i{K} \vspace{2pt}\\
					* &  {\munderbar{P}}^{\top} {\munderbar{R}}_{[i]} {\munderbar{P}} & \0 \\ * & * & {\bar{P}}^{\top} {\bar{R}}_{[i]} {\bar{P}} 
				\end{bmatrix} \succ 0.
		\end{matrix}}
	\end{equation}
	We now formulate a criterion to select the variables formulating \eqref{eq:mRPI_NL}-\eqref{eq:input_feasible_NL} along with $(A,B,d) \in \hat{\Sigma}_T$, leading to an optimization problem.
	In this formulation, we assume that the matrices $(\munderbar{P},\bar{P},F)$ are known a priori. While this assumption increases conservativeness in our approach, it simplifies the solution procedure. We note that a good set of hyperplanes $(\munderbar{P},\bar{P})$ can \textit{guessed} for some initial $(A,B,F,d)$ using \cite{Liu2019}, and kept constant for our approach. Moreover, our approach can be extended to optimize over $(\munderbar{P},\bar{P})$ using the results in \cite{LiuThesis}. We skip further details here due to space limitations.
		\begin{remark}
			The condition in 
			Theorem \ref{thm:inclusion_result} is necessary and sufficient for the inclusion $C \mathcal{P}(M^c,b^c) \subseteq \mathcal{P}(M^0,b^0)$ if a non-strict inequality $\succeq$ is used. However, we only use the sufficiency property given by $\succ$ for numerical robustness. $	\hfill\square $
	\end{remark}
	%		the results in \cite{LiuThesis} can be used to also optimize over $(\munderbar{P},\bar{P})$ in our approach. 
	%	
	%	We will formulate a criterion to select $(A,B,d,K,\munderbar{b},\bar{b})$ satisfying $(A,B,d) \in \Sigma$ and \eqref{eq:mRPI_NL}-\eqref{eq:input_feasible_NL}. These conditions are nonlinear in the $(1,1)$ blocks, and when the product $BK$ is present. 
	%\mario{maybe you can comment a bit on $(\munderbar{P},\bar{P})$ being fixed. You told me that you can in principle free them, so maybe you can hint at that here too}
	%
	%
	% We remark that, 
	%
	%that the matrices $\munderbar{P}$ and $\bar{P}$ are known a priori, and will now formulate a criterion to select the remaining variables. 
	%\mario{this section is very dry, maybe you should try to give some intuition, at least about $\munderbar{P}$ and $\bar{P}$ and how restrictive is the assumption of knowing them}
	%	\subsubsection{\textbf{Identification criterion}}
	
	3) \textbf{\textit{Identification criterion:}}
	For RMPC synthesis, it is desirable to compute a small RPI set $\Delta \mathcal{X}$ to reduce constraint tightening,
	%	the tightening of constraint sets $\mathcal{X}$ and $\mathcal{U}$
	%	constraint tightening on $\hat x$ and $\hat u$ 
	%	in~\eqref{eq:smallRPI}, 
	and to regulate the system to a small neighborhood of the origin \cite{Mayne2005}. Hence, we minimize $\norm{\munderbar{b}}_1$, since it corresponds to the smallest (in an inclusion sense) RPI set represented by fixed hyperplanes $\munderbar{P}$ \cite[Corollary 1]{Rakovic2013}.
%	
%	 Since the smallest RPI set (in an inclusion sense) defined by fixed hyperplanes $\munderbar{P}$ has the smallest $1$-norm value in the r.h.s. \cite[Corollary 1]{Rakovic2013}, we minimize $\norm{\munderbar{b}}_1$.
	%
	%bounds the deviation of the actual system trajectory from the nominal trajectory, and is the neighborhood of the origin to which the system is stabilized by the RMPC controller \cite[Theorem 1]{Mayne2005}. 
%	Hence, we minimize $\norm{\munderbar{b}}_1$, since it corresponds to minimizing the volume of the RPI set represented by fixed hyperplanes $\munderbar{P}$ \cite[Corollary 1]{Rakovic2013}.
	%	{\color{magenta}. The choice of using $1$-norm follows from the fact that the smallest volume RPI set $\mathcal{P}(\munderbar{P},\munderbar{b})$ defined with a fixed set of hyperplanes $\munderbar{P}$ has the smallest $||\munderbar{b}||_1$ \cite[Corollary 1]{Rakovic2013}.}
	%	to minimize the size of $\Delta \mathcal{X}$.
	%
	%Since we fix the matrix $\munderbar{P}$ parameterizing $\Delta \mathcal{X}$, we propose to minimize $\norm{\munderbar{b}}_1$ to minimize the size of the RPI set.
	Moreover, we know from \cite[Proposition 2]{Mayne2005} that a large terminal set $\mathcal{X}_{\mathrm{t}}$ maximizes the region of attraction $\Omega_N$. Hence, we maximize the size of $\mathcal{X}_{\mathrm{t}}$ by minimizing a distance metric between $\mathcal{X}_{\mathrm{t}}$ and the state constraint set $\mathcal{X}$ as follows: let  $\mathcal{B}(\bar{\epsilon}):=\mathcal{P}(\bar{E},\bar{\epsilon}) \subset \R^{n_x}$ with $\bar{\epsilon} \in \R^{m_{\epsilon}}_+$ and $\bar{E}$ fixed a priori; then, 
	%and propose to maximize the coverage of the state constraint set $\mathcal{X}$ by the PI set $\mathcal{P}(\bar{P},\bar{b})$, 
	we minimize $\norm{\bar{\epsilon}}_1$ subject to the inclusion $\mathcal{X}\subseteq \mathcal{P}(\bar{P},\bar{b})\oplus\mathcal{B}(\bar{\epsilon})$. Assuming to know the vertices $\{x_{[i]}, i \in \mathbb{I}_1^{m_x^v}\}$ of $\mathcal{X}$,  $(\bar{b},\bar{\epsilon}) \in \bar{\mathcal{S}}$ implies $\mathcal{X}\subseteq \mathcal{P}(\bar{P},\bar{b})\oplus\mathcal{B}(\bar{\epsilon})$ where $\bar{\mathcal{S}}:=\{
		(\bar{b},\bar{\epsilon}): \ 
		 x_{[i]} \in \mathcal{P}(\bar{P},\bar{b}) \oplus \mathcal{P}(\bar{E},\bar{\epsilon}), \forall \ i \in \mathbb{I}_1^{m_x^v}\}.$
	%		 vertices $\mathcal{X}=\mathrm{ConvexHull}(x_{[i]}, i \in \mathbb{I}_1^{m_x^v})$, i.e., 
	%	the vertices $x_{[i]}$ of the set $\mathcal{X}$ are known, 
	%	and defining the polyhedron
%	\begin{align*}
%		\scalemath{0.91}{
%			\bar{\mathcal{S}}:=\begin{Bmatrix}
%				(\bar{b},\bar{\epsilon}): \ 
%				\begin{matrix}	
%					\begin{Bmatrix} x_{[i]}=x^0_{[i]}+x^1_{[i]}, \vspace{3pt} \\
%						x^0_{[i]} \in \mathcal{P}(\bar{P},\bar{b}),\
%						x^1_{[i]} \in \mathcal{P}(\bar{E},\bar{\epsilon}),\
%					\end{Bmatrix}, 	\forall \ i \in \mathbb{I}_1^{m_x^v}
%				\end{matrix}
%			\end{Bmatrix}
%		}.
%	\end{align*}
	%	 $(\bar{b},\bar{\epsilon}) \in \bar{\mathcal{S}}$ implies $\mathcal{X}\subseteq \mathcal{P}(\bar{P},\bar{b})\oplus\mathcal{B}(\bar{\epsilon})$.
	%,
	%% can be encoded as $(\bar{b},\bar{\epsilon}) \in \bar{\mathcal{S}}$, 
	%where
	%\begin{align*}
	%	\scalemath{0.91}{
	%		\bar{\mathcal{S}}:=\begin{Bmatrix}
	%			(\bar{b},\bar{\epsilon}): \ 
	%			\begin{matrix}	
	%				\begin{Bmatrix} x_{[i]}=x^0_{[i]}+x^1_{[i]}, \vspace{3pt} \\
	%					x^0_{[i]} \in \mathcal{P}(\bar{P},\bar{b}),\
	%					x^1_{[i]} \in \mathcal{P}(\bar{E},\bar{\epsilon}),\
	%				\end{Bmatrix}, 	\forall \ i \in \mathbb{I}_1^{m_x^v},
	%			\end{matrix}
	%		\end{Bmatrix}
	%	}
	%\end{align*}
	%is a set of linear constraints.
	% If the vertices of $\mathcal{X}$ are not known, then
	%	Note that the linear encoding in \cite{Sadraddini2019} can be used directly with the polytopic representation of $\mathcal{X}$.
	%	{\color{blue} 
	
	Finally, since the performance matrices $(H_{\mathrm{Q}},P_{\mathrm{Q}})$ formulating the RMPC controller in \eqref{eq:RMPC_controller} are fixed by $(A,B,K)$ as noted in Remark \ref{rem:FB_performance}, we introduce a way to tune the closed-loop performance: We 
	%		{\color{red}
	evaluate the performance
	%		}
	using the system $\hat{x}(t+1)=A\hat{x}(t)+B\hat{u}(t)$ inside the terminal set as
	%		 with $\hat{u}(t)=K\hat{x}(t)$ and $\hat{x}(0) \in \mathcal{P}(\bar{P},\bar{b})$, as 
	%		
	%		inside the terminal set using the nominal system $\hat{x}(t+1)=A\hat{x}(t)+B\hat{u}(t)$ with input $\hat{u}(t)=K\hat{x}(t)$ as
	\begin{align}
		\label{eq:performance_bound}
		\scalemath{0.95}{
			\hspace{-8pt}
			\hat{x}(0) \in \mathcal{X}_{\mathrm{t}}, \ \hat{u}(t)=K\hat{x}(t), \
			\sum_{t=0}^{\infty} \norm{\hat{x}(t)}^2_{\tilde{Q}}+\norm{\hat{u}(t)}^2_{\tilde{R}} \leq {\tilde{r}},}
	\end{align}
	where $\tilde{Q} \in \So_+^{n_x}$ and $\tilde{R}\in\So_+^{n_u}$ are user-defined performance matrices,	
	%		 We assume that $\tilde{Q} \in \So_+^{n_x}$ and $\tilde{R}\in\So_+^{n_u}$ are provided by the user as the preferred performance matrices, which are used to evaluate the closed-loop performance as 
	%		\begin{align}
	%		\label{eq:performance_bound}
	%		\scalemath{0.92}{
	%			\sum_{k=0}^{\infty} \norm{\hat{x}(k)}^2_{\tilde{Q}}+\norm{\hat{u}(k)}^2_{\tilde{R}} \leq {\tilde{r}}},
	%	\end{align}
	%	Finally, in order to obtain a measure of the performance, we assume to know matrices $\tilde{Q} \in \So_+^{n_x}$ and $\tilde{R}\in\So_+^{n_u}$ evaluating the stage cost as $\norm{x}^2_{\tilde{Q}}+\norm{u}^2_{\tilde{R}}$, such that 
	%	the performance of the nominal closed-loop system $\hat{x}(k+1)=(A+BK)\hat{x}(k)$ 
	%	% 
	%	% \eqref{eq:nominal_model} 
	%	inside the PI terminal set from all $\hat{x}(0) \in \mathcal{P}(\bar{P},{\bar{b}})$ 
	%	% and $\hat{u}(k)=K\hat{x}(k)$
	%	is upper bounded as
	%	\begin{align}
	%		\label{eq:performance_bound}
	%		\scalemath{0.92}{
	%			\sum_{k=0}^{\infty} \norm{\hat{x}(k)}^2_{\tilde{Q}}+\norm{\hat{u}(k)}^2_{\tilde{R}} \leq {\tilde{r}}},
	%	\end{align}
	%	where $\hat{u}(k)=K\hat{x}(k)$.
	and we minimize $\tilde{r}$. Then, if ${\tilde{\Theta}}\in \So_+^{n_x}$ satisfies 
	%		As shown in~\cite{Kothare1996}, if a matrix ${\tilde{\Theta}}\in \So_+^{n_x}$ satisfies 
	%Hence, we propose to minimize the upper bound $\tilde{r}>0$ : We recall from \cite{Kothare1996} that if a matrix ${\tilde{P}}\in \So_+^{n_x}$ satisfies 
	%the dissipativity condition 
	\vspace{-1pt}
	\begin{align}
		\label{eq:dissipativity_NL}
		\scalemath{0.92}{
			({A}+{BK})^{\top}{\tilde{\Theta}}({A}+{BK})-{\tilde{\Theta}}+\tilde{Q}+{K}^{\top} \tilde{R} {K} \prec 0,}	
	\end{align}
	%\mario{Is this not competing with having a large terminal set?}
	the left-hand-side of the inequality in \eqref{eq:performance_bound} is upper bounded by $\norm{\hat{x}(0)}^2_{{\tilde{\Theta}}}$ \cite{Kothare1996}. Hence, \eqref{eq:performance_bound} is satisfied if the inclusion $\mathcal{P}(\bar{P},{\bar{b}}) \subseteq \mathcal{E}({\tilde{\Theta}},\tilde{{r}})$ holds, thus imposing an upper bound on the size of the terminal set.
	Following the S-procedure \cite[Section 2.6.3]{lmi_book},
	%		
	%		 Farkas' theorem \cite{Terlaky2007},
	the inclusion $\mathcal{P}(\bar{P},{\bar{b}}) \subseteq \mathcal{E}({\tilde{\Theta}},\tilde{{r}})$ holds if
	%holds if
	% $\exists \ {\tilde{M}} \in \D_+^{\bar{m}}$ such that
	%a sufficient condition for which can be derived using the Farkas' theorem \cite{Terlaky2007} as
	\begin{equation}
		\label{eq:performance_ellipse_NL}
		\scalemath{0.92}{
			\begin{matrix}
				%		\exists \ {\tilde{P}}\succ 0, {\tilde{r}}> 0,  {\tilde{M}} \in \D_+^{\bar{m}},\text{ such that }\vspace{5pt}\\
				\exists {\tilde{M}} \in \D_+^{\bar{m}} \quad \text{s.t.} \quad {\bar{P}}^{\top} {\tilde{M}} {\bar{P}} - {\tilde{\Theta}} \succ 0, \quad {\tilde{r}} - {\bar{b}}^{\top} {\tilde{M}} {\bar{b}}>0.
		\end{matrix}}
	\end{equation}
	Based on these considerations, we formulate the identification problem as the following NLPMI
	%
	% to minimize the size of the RPI set, maximize the size of the PI set, and minimize the performance criterion as
	\begin{align}
		\label{eq:main_SDP}
		&\min_{{Z}_{\mathrm{NL}}} \
		\alpha{\norm{\munderbar{b}}_1}+
		\beta{\norm{\bar{\epsilon}}_1}+\gamma \tilde{r} \\
		& \quad \text{s.t.} \ ({A},{B},{d}) \in \hat{\Sigma}_T, \ (\bar{b},\bar{\epsilon}) \in \bar{\mathcal{S}}, \ \eqref{eq:mRPI_NL}-\eqref{eq:input_feasible_NL}, \eqref{eq:dissipativity_NL}-\eqref{eq:performance_ellipse_NL}\nonumber 
		%	& \quad  \quad \ \ \eqref{eq:mRPI_NL}-\eqref{eq:input_feasible_NL}, \eqref{eq:dissipativity_NL}-\eqref{eq:performance_circle_NL}, \nonumber
	\end{align}
	where $\alpha,\beta,\gamma\geq0$ are user-defined weights, and
	%the optimization vector is composed of the variables
	%in which we assume that the matrices $\munderbar{P}$ and $\bar{P}$ are known apriori, such that the optimization vector denotes the set of all remaining variables formulating the constraints, i.e., 
	%\begin{align*}
	%	\scalemath{0.95}{
	%		{Z}_{\mathrm{NL}}:=\begin{pmatrix}
	%			A,B,d,K,\munderbar{b},\bar{b},\tilde{P},\tilde{r},\tilde{M},\bar{\Gamma},\bar{\beta},\bar{\Lambda},\\
	%			\{\munderbar{D}_i,{{W}}_i,i \in \mathbb{I}_1^{\munderbar{m}}\},
	%			\{{\bar{D}}_i, i \in \mathbb{I}_1^{\bar{m}}\},\\
	%			\{\munderbar{S}_i,\bar{S}_i,i\in \mathbb{I}_1^{m_x}\},
	%			\{\munderbar{R}_i,\bar{R}_i,i \in \mathbb{I}_1^{m_u}\}
	%	\end{pmatrix}}.
	%\end{align*}
	\begin{align*}
		\scalemath{0.9}{
			{Z}_{\mathrm{NL}}:=\begin{pmatrix}
				A,B,d,\mathcal{Z},\lambda,K,\munderbar{b},\bar{b},\tilde{\Theta},\tilde{r},\tilde{M},\bar{\epsilon},
				\\ \{\munderbar{D}_{[i]},{{W}}_{[i]},i \in \mathbb{I}_1^{\munderbar{m}}\},
				\{{\bar{D}}_{[i]}, i \in \mathbb{I}_1^{\bar{m}}\},\\ \{\munderbar{S}_{[i]},\bar{S}_{[i]},i\in \mathbb{I}_1^{m_x}\},
				\{\munderbar{R}_{[i]},\bar{R}_{[i]},i \in \mathbb{I}_1^{m_u}\}
		\end{pmatrix}}.
	\end{align*}

	4) \textbf{\textit{Feasible SCP for Problem \eqref{eq:main_SDP}}}:
	In order to solve Problem \eqref{eq:main_SDP}, a standard 
	%		\sout{{\color{red}Sequential Convex Programming}} 
	SCP approach can be adopted, in which a sequence of SDPs approximating \eqref{eq:main_SDP} are solved. However, to guarantee feasibility of the iterates, we adopt the following SCP procedure.
	%		for \cite{Liu2019}. 
	%		\mario{for [10]? I would anyway not cite [10]}
	%	
	%	Problem \eqref{eq:main_SDP} can be solved by Sequential Convex Programing (SCP)  \cite{TranDinhQuoc2012SCPa} as follows.
	Starting from an initial feasible iterate $Z_{\mathrm{NL}}$,
	we solve a sequence of SDPs formulated using sufficient LMI conditions for the constraints of Problem \eqref{eq:main_SDP}, such that the method produces feasible iterates. The sufficient LMI conditions
		are formulated using convex underestimates \cite{TranDinhQuoc2012SCPa} of the NLMI constraints at the current iterate. Moreover, the objective value of \eqref{eq:main_SDP} is non-increasing over the iterates, such that globalization is unnecessary and we terminate when the objective value does not reduce further.

	\noindent \vspace{5pt}\\
	{\textbf{($a$) Convex SDP approximation}}:
	Given a feasible iterate $Z_{\mathrm{NL}}$ for Problem \eqref{eq:main_SDP}, we formulate sufficient LMI conditions for \eqref{eq:mRPI_NL}-\eqref{eq:input_feasible_NL},~\eqref{eq:dissipativity_NL},~\eqref{eq:performance_ellipse_NL} using the following result.
	%based on which formulate a local convex SDP to update the solution and iterate within an SCP algorithm. Finally, we present a heuristic to compute an initial feasible iterate $Z_{\mathrm{NL}}$. 
	%\mario{let's think whether it's better to first discuss the update procedure rather than the initialization}
	%After computing initial variables $Z_{\mathrm{NL}}$ feasible for Problem \eqref{eq:main_SDP}, we now present a linearization based procedure to solve Problem \eqref{eq:main_SDP}. To this end, we recall the following result from \cite{Liu2019} to derive sufficient LMI conditions for \eqref{eq:mRPI_NL}-\eqref{eq:input_feasible_NL},\eqref{eq:dissipativity_NL},\eqref{eq:performance_ellipse_NL} about the feasible point, thus formulating a local convex SDP to update the solution.
	%that we use to linearize the NLMI conditions.
	\begin{proposition}{\cite[Lemma 2.1]{Liu2019}}
		\label{prop:linearization}
		Let matrices $\bm{L},L \in \R^{m \times n}$ and $\bm{D},D \in \So^m_+$, and define the matrix functions $	\mathcal{L}^{L,D}_{\bm{L},\bm{D}}:=\bm{L}^{\top}D^{-1}L+L^{\top}D^{-1}\bm{L}-L^{\top}D^{-1}\bm{D}D^{-1}L,$ and $\mathcal{N}_{\bm{L},\bm{D}}:=\bm{L}^{\top}\bm{D}^{-1}\bm{L}$.
		%	\begin{align*}
		%		\scalemath{0.95}{
		%		\begin{matrix}
		%		\mathcal{L}^{L,D}_{\mathbf{L},\mathbf{D}}&:=\mathbf{L}^{\top}D^{-1}L+L^{\top}D^{-1}\mathbf{L}-L^{\top}D^{-1}\mathbf{D}D^{-1}L, \\
		%		\mathcal{N}_{\mathbf{L},\mathbf{D}}&\hspace{-130pt}:=\mathbf{L}^{\top}\mathbf{D}^{-1}\mathbf{L}.
		%		\end{matrix}}
		%	\end{align*}
		Then, $\mathcal{N}_{\mathbf{L},\mathbf{D}}\succeq \mathcal{L}^{L,D}_{\mathbf{L},\mathbf{D}}$ and $\mathcal{N}_{{L},{D}}=\mathcal{L}^{L,D}_{{L},{D}}$. Hence, if $ \exists(L,D)$ such that $\mathcal{N}_{{L},{D}}\succ 0$, then
		$\exists(\bm{L},\bm{D})$ such that $\mathcal{N}_{\bm{L},\bm{D}} \succeq \mathcal{L}^{L,D}_{\bm{L},\bm{D}} \succ 0$. $\hfill\square$
	\end{proposition}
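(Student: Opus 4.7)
The plan is to verify the two algebraic facts stated first, and then deduce the existence implication for free. Both facts concern the same two matrix functions, differing only in the point at which they are evaluated, so the argument is essentially a bookkeeping exercise rather than a structural result.

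First I would check the equality $\mathcal{N}_{L,D}=\mathcal{L}^{L,D}_{L,D}$ by direct substitution of $(\bm{L},\bm{D})=(L,D)$ into the definition of $\mathcal{L}^{L,D}_{\bm{L},\bm{D}}$: the two first-order terms add to $2L^{\top}D^{-1}L$, the Hessian-type correction contributes $-L^{\top}D^{-1}L$, and the sum collapses to $L^{\top}D^{-1}L=\mathcal{N}_{L,D}$, as required.

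Next, for the matrix inequality $\mathcal{N}_{\bm{L},\bm{D}}\succeq\mathcal{L}^{L,D}_{\bm{L},\bm{D}}$, the key step is to write the difference as a manifest positive semidefinite quantity. I expect the completion-of-squares identity
\begin{equation*}
\mathcal{N}_{\bm{L},\bm{D}}-\mathcal{L}^{L,D}_{\bm{L},\bm{D}} \;=\; \bigl(\bm{L}-\bm{D}D^{-1}L\bigr)^{\top}\bm{D}^{-1}\bigl(\bm{L}-\bm{D}D^{-1}L\bigr),
\end{equation*}
which I would check by expanding the right-hand side and using the symmetry of $D$ and $\bm{D}$ together with $\bm{D}\bm{D}^{-1}=I$ to collapse the cross terms into the three pieces of $\mathcal{L}^{L,D}_{\bm{L},\bm{D}}$. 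Since $\bm{D}\in\So_+^m$ implies $\bm{D}^{-1}\succ 0$, the right-hand side is a congruence of a positive-definite matrix and therefore positive semidefinite, giving the inequality.

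The existence implication then follows without further work: given any $(L,D)$ with $\mathcal{N}_{L,D}\succ 0$, the choice $(\bm{L},\bm{D})=(L,D)$ makes $\mathcal{L}^{L,D}_{\bm{L},\bm{D}}=\mathcal{N}_{L,D}\succ 0$ by the equality, and the inequality holds trivially with equality. The main obstacle is simply spotting the square-completion identity above; conceptually, however, $\mathcal{L}^{L,D}_{\bm{L},\bm{D}}$ is the first-order Taylor expansion about $(L,D)$ of the jointly matrix-convex map $(X,Y)\mapsto X^{\top}Y^{-1}X$, and must therefore be a supporting affine minorant. This geometric viewpoint reduces the identity to a guided ansatz rather than a blind guess, and from there the verification is purely mechanical.
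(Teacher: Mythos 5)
Your proof is correct. Note that the paper itself gives no proof of this proposition—it is imported verbatim as \cite[Lemma 2.1]{Liu2019}—so there is no in-paper argument to compare against; your completion-of-squares identity $\mathcal{N}_{\bm{L},\bm{D}}-\mathcal{L}^{L,D}_{\bm{L},\bm{D}}=(\bm{L}-\bm{D}D^{-1}L)^{\top}\bm{D}^{-1}(\bm{L}-\bm{D}D^{-1}L)$ expands correctly (using symmetry of $D^{-1}$ and $\bm{D}$), the substitution $(\bm{L},\bm{D})=(L,D)$ gives the stated equality, and the existence implication follows as you say, which is exactly the standard argument behind this linearization lemma.
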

	This result implies that if $\mathcal{N}_{L,D} \succ 0$, then the LMI $\mathcal{L}^{L,D}_{\bm{L},\bm{D}} \succ 0$ is a convex underestimate and a sufficient condition for $\mathcal{N}_{\bm{L},\bm{D}} \succ 0$.
	We will now use this property to formulate sufficient LMIs for \eqref{eq:mRPI_NL}-\eqref{eq:input_feasible_NL},~\eqref{eq:dissipativity_NL},~\eqref{eq:performance_ellipse_NL}.
	%	
	%	
	%	the matrix inequalities in the constraints of Problem~\eqref{eq:main_SDP}. 
	The claimed SCP feasibility and cost decrease are then obtained as a corollary.
	%\eqref{eq:mRPI_NL},\eqref{eq:MPI_NL},\eqref{eq:state_feasible_NL},\eqref{eq:input_feasible_NL},\eqref{eq:dissipativity_NL},\eqref{eq:performance_ellipse_NL}}
	%	
	%	
	%	{\color{blue} The following main result
	%		of this paper uses this property to formulate a local convex SDP that computes a new feasible solution for Problem \eqref{eq:main_SDP} with a non-increasing objective value.}
	%	whose solution is feasible for Problem \eqref{eq:main_SDP}.
	%	to feasibly update the solution of Problem \eqref{eq:main_SDP}.
	%
	%
	%update the solution of Problem \eqref{eq:main_SDP}, in which we indicate the optimization variables in bold.
	%The following result can be used to update the solution of Problem \eqref{eq:main_SDP} given a feasible solution $Z_{\mathrm{NL}}$.
	
	\begin{theorem}
		\label{thm:main_result}
		Suppose that $Z_{\mathrm{NL}}$ is feasible for \eqref{eq:main_SDP}. Then:
		%	Suppose that the matrices $\munderbar{P}$, $\bar{P}$, $\tilde{Q}$, $\tilde{R}$, and variables $Z_{\mathrm{NL}}$ 
		%	$A,B,d,K,\munderbar{b},\bar{b},\{\munderbar{D}_i,W_i, i\in\mathbb{I}_1^{\munderbar{m}}\},\{\bar{D}_i,\bar{\mu}_i, i \in \mathbb{I}_1^{\bar{m}}\},\{\munderbar{S}_i,\bar{S}_i,i\in \mathbb{I}_1^{m_x}\},\{\munderbar{R}_i,\bar{R}_i,i \in \mathbb{I}_1^{m_u}\},r_{\mathrm{1}},\munderbar{M},r_{\mathrm{2}},\tilde{P},\tilde{r},\tilde{M},\bar{\sigma},r_{\mathrm{3}}$, 
		%	are feasible for Problem \eqref{eq:main_SDP}. Then,  
		
		($i$) \textit{RPI condition \eqref{eq:mRPI_NL}}: For each $ i \in \mathbb{I}_1^{\munderbar{m}}$, there exists
		%		%		$(\bm{A},\bm{B},\bm{\delta},\bm{K},\munderbar{\bm{b}}$,$\munderbar{\bm{D}}^{{-1}}_{[i]},{\bm{W}}^{{-1}}_{[i]})$ 
		$(\bm{A},\bm{B},\bm{d},\bm{K},\munderbar{\bm{b}}$,$\hat{\munderbar{\bm{D}}}_{[i]},\hat{{\bm{W}}}_{[i]})$ 
		satisfying the LMI
		%			($i$) \textit{RPI condition \eqref{eq:mRPI_NL}}:  
		%			$\forall \ i \in \mathbb{I}_1^{\munderbar{m}}$,
		%		$(\bm{A},\bm{B},\bm{\delta},\bm{K},\munderbar{\bm{b}}$,$\munderbar{\bm{D}}^{{-1}}_{[i]},{\bm{W}}^{{-1}}_{[i]})$ 
		%		$\exists (\bm{A},\bm{B},\bm{\delta},\bm{K},\munderbar{\bm{b}}$,$\hat{\munderbar{\bm{D}}}_{[i]},\hat{{\bm{W}}}_{[i]})$ 
		%		satisfying the LMI
		\begin{align}
			\label{eq:mRPI_LMI}
			%			&\hspace{10pt}
			%			\forall \ i \in \mathbb{I}_1^{\munderbar{m}}, \exists (\bm{A},\bm{B},\bm{\delta},\bm{K},\munderbar{\bm{b}},\hat{\munderbar{\bm{D}}}_{[i]},\hat{{\bm{W}}}_{[i]}) \text{ satisfying}\nonumber \\
			&\scalemath{0.83}{
				\hspace{10pt}
				\begin{bmatrix}
					\I &\hspace{-10pt} \0 &\hspace{-10pt} \0 &\hspace{-10pt} -\bm{B}^{\top}\munderbar{P}_i^{\top} &\hspace{-10pt} \0 &\hspace{-10pt} \bm{K} \\
					* &\hspace{-5pt} \hat{\munderbar{\bm{D}}}_{[i]} &\hspace{-10pt} \0 &\hspace{-10pt} \munderbar{\bm{b}} &\hspace{-10pt} \0 &\hspace{-10pt} \0 \\
					* &\hspace{-10pt} * &\hspace{-5pt}  \hat{{\bm{W}}}_{[i]} &\hspace{-10pt} \bm{d} &\hspace{-10pt} \0 &\hspace{-10pt} \0  \\
					* &\hspace{-10pt} * &\hspace{-10pt} * &\hspace{-10pt} 2\munderbar{\bm{b}}_i+ \mathcal{L}_{\bm{B}^{\top}\munderbar{P}_i^{\top},\I}^{{B}^{\top}\munderbar{P}_i^{\top},\I} &\hspace{-10pt} \munderbar{P}_i &\hspace{-10pt} \munderbar{P}_i\bm{A} \\
					* &\hspace{-10pt} * &\hspace{-10pt} * &\hspace{-10pt} * &\hspace{-10pt} \mathcal{L}_{F,\hat{{\bm{W}}}_{[i]}}^{F,{W_{[i]}^{-1}}} &\hspace{-10pt} \0 \\
					* &\hspace{-10pt} * &\hspace{-10pt} * &\hspace{-10pt} * &\hspace{-10pt} * &\hspace{-10pt} \mathcal{L}_{\munderbar{P}\hat{\munderbar{\bm{D}}}_{[i]}}^{\munderbar{P},{\munderbar{D}_{[i]}^{-1}}}+\mathcal{L}_{\bm{K},\I}^{K,\I}
				\end{bmatrix} \succ 0},
			%			&\hspace{20pt}\text{and $(\bm{A},\bm{B},\bm{\delta},\bm{K},\munderbar{\bm{b}},\hat{\munderbar{\bm{D}}}^{-1}_{[i]},\hat{{\bm{W}}}^{-1}_{[i]})$ satisfies \eqref{eq:mRPI_NL}.} \nonumber
		\end{align}
		and 
		%		 $(A,B,d,K,\munderbar{b},\munderbar{D}_{[i]},W_{[i]})=$
		$(\bm{A},\bm{B},\bm{d},\bm{K},\munderbar{\bm{b}},\hat{\munderbar{\bm{D}}}^{-1}_{[i]},\hat{{\bm{W}}}^{-1}_{[i]})$ satisfies \eqref{eq:mRPI_NL}.

		($ii$) \textit{PI condition \eqref{eq:MPI_NL}}: 
		For each $i \in \mathbb{I}_1^{\bar{m}}$, there exists %$(\bm{A},\bm{B},\bm{d},\bm{K},\bar{\bm{b}},\bar{\bm{D}}^{{-1}}_{[i]%})$ 
		$(\bm{A},\bm{B},\bm{K},\bar{\bm{b}},\hat{\bar{\bm{D}}}_{[i]})$ 
		satisfying the LMI
		%
		%There exist variables $\bm{A},\bm{B},\bm{d},\bm{K},\bar{\bm{b}}$ and $\bar{\bm{D}}^{{-1}}_i\in \D_+^{\bar{m}}$ for each $i \in \mathbb{I}_1^{\bar{m}}$ satisfying the LMI
		\begin{align}
			\label{eq:MPI_LMI}
			%			&\hspace{10pt}\forall i \in \mathbb{I}_1^{\bar{m}}, \exists  
			%			(\bm{A},\bm{B},\bm{K},\bar{\bm{b}},\hat{\bar{\bm{D}}}_{[i]}) \text{  satisfying} \nonumber \\
			&\scalemath{0.83}{
				\begin{bmatrix}
					\I & \0 & -\bm{B}^{\top}\bar{P}_i^{\top}  & \bm{K} \\
					* & \hat{\bar{\bm{D}}}_{[i]}  & \bar{\bm{b}} & \0 \\
					* & * & 2\bar{\bm{b}}_i+ \mathcal{L}_{\bm{B}^{\top}\bar{P}_i^{\top},\I}^{{B}^{\top}\bar{P}_i^{\top},\I} & \bar{P}_i\bm{A} \\
					* & * & * & \mathcal{L}_{\bar{P},\hat{\bar{\bm{D}}}_{[i]}}^{\bar{P},{\bar{D}_{[i]}^{-1}}}+\mathcal{L}_{\bm{K},\I}^{K,\I}
				\end{bmatrix} \succ 0},
			%			&\hspace{20pt}\text{and 
			%		$\scalemath{1}{(A,B,d,K,\bar{b},\bar{D}_{[i]})=$
			%				$(\bm{A},\bm{B},\bm{K},\bar{\bm{b}},\hat{\bar{\bm{D}}}^{-1}_{[i]})$ satisfies \eqref{eq:MPI_NL}. } \nonumber 
		\end{align}
		and 
		%		$\scalemath{1}{(A,B,d,K,\bar{b},\bar{D}_{[i]})=$
		$(\bm{A},\bm{B},\bm{K},\bar{\bm{b}},\hat{\bar{\bm{D}}}^{-1}_{[i]})$ satisfies \eqref{eq:MPI_NL}. 
		
		($iii$) \textit{Constraint inclusions \eqref{eq:state_feasible_NL},~\eqref{eq:input_feasible_NL}}:
		For each $i \in \mathbb{I}_1^{m_x}$, there exists 
		%		 $(\munderbar{\bm{b}},\bar{\bm{b}},\munderbar{\bm{S}}^{{-1}}_{[i]},\bar{\bm{S}}^{{-1}}_{[i]})$,
		$(\munderbar{\bm{b}},\bar{\bm{b}},\hat{\munderbar{\bm{S}}}_{[i]},\hat{\bar{\bm{S}}}_{[i]})$,
		and for each  $i \in \mathbb{I}_1^{m_u}$, there exists
		%	$(\bm{K},\munderbar{\bm{b}},\bar{\bm{b}},\munderbar{\bm{R}}^{{-1}}_{[i]},\bar{\bm{R}}^{{-1}}_{[i]})$ 
		$(\bm{K},\munderbar{\bm{b}},\bar{\bm{b}},\hat{\munderbar{\bm{R}}}_{[i]},\hat{\bar{\bm{R}}}_{[i]})$ 
		satisfying the LMIs
		%There exist variables $\munderbar{\bm{b}},\bar{\bm{b}},\munderbar{\bm{S}}^{{-1}}_i\in \D_+^{\munderbar{m}},\bar{\bm{S}}^{{-1}}_i\in \D_+^{\bar{m}}$ for each $i \in \mathbb{I}_1^{m_x}$ satisfying the LMI
		\vspace{-10pt}\\
		%		\begin{align}
		%			\label{eq:input_feasible_LMI}
		%			\scalemath{0.85}{
		%				\begin{bmatrix}
		%					\hat{\munderbar{\bm{R}}}_{[i]} & \0 & \munderbar{\bm{b}} & \0 & \0\\
		%					* & \hat{\bar{\bm{R}}}_{[i]} & \bar{\bm{b}} & \0 & \0\\
		%					* & * & 2v^u_i  & V^u_i\bm{K} & V^u_i\bm{K} \\
		%					* & * & * & \mathcal{L}_{{\munderbar{P}},\hat{\munderbar{\bm{R}}}_{[i]}}^{{\munderbar{P}},{\munderbar{R}}_{[i]}^{-1}} & \0 \\
		%					* & * & * & * & \mathcal{L}_{{\bar{P}},\hat{\bar{\bm{R}}}_{[i]}}^{{\bar{P}},{\bar{R}}_{[i]}^{-1}}
		%				\end{bmatrix} \succ 0},
		%		\end{align}
		%	
		\begin{align}
			\label{eq:state_feasible_LMI}
			%			\begin{matrix*}[r]
			%				\forall i \in \mathbb{I}_1^{m_x}, \\ 
			%			\exists (\munderbar{\bm{b}},\bar{\bm{b}},\hat{\munderbar{\bm{S}}}_{[i]},\hat{\bar{\bm{S}}}_{[i]}) \\
			%			\text{satisfying}
			%			\end{matrix*}
			\scalemath{0.83}{
				\begin{bmatrix}
					\hat{\munderbar{\bm{S}}}_{[i]} & \0 & \munderbar{\bm{b}} & \0 & \0\\
					* & \hat{\bar{\bm{S}}}_{[i]} & \bar{\bm{b}} & \0 & \0\\
					* & * & 2v^x_i  & V^x_i & V^x_i \\
					* & * & * & \mathcal{L}_{{\munderbar{P}},\hat{\munderbar{\bm{S}}}_{[i]}}^{{\munderbar{P}},{\munderbar{S}}_{[i]}^{-1}} & \0 \\
					* & * & * & * & \mathcal{L}_{{\bar{P}},\hat{\bar{\bm{S}}}_{[i]}}^{{\bar{P}},{\bar{S}}_{[i]}^{-1}}
				\end{bmatrix} \succ 0}, 
		\end{align}
		\vspace{-25pt}\\
		\begin{align}
			\label{eq:input_feasible_LMI}
			%				\begin{matrix*}[r]
			%				\forall i \in \mathbb{I}_1^{m_u}, \\ 
			%				\exists (\bm{K},\munderbar{\bm{b}},\bar{\bm{b}},\hat{\munderbar{\bm{R}}}_{[i]},\hat{\bar{\bm{R}}}_{[i]}) \\
			%				\text{satisfying}
			%			\end{matrix*}
			\scalemath{0.83}{
				\begin{bmatrix}
					\hat{\munderbar{\bm{R}}}_{[i]} & \0 & \munderbar{\bm{b}} & \0 & \0\\
					* & \hat{\bar{\bm{R}}}_{[i]} & \bar{\bm{b}} & \0 & \0\\
					* & * & 2v^u_i  & V^u_i\bm{K} & V^u_i\bm{K} \\
					* & * & * & \mathcal{L}_{{\munderbar{P}},\hat{\munderbar{\bm{R}}}_{[i]}}^{{\munderbar{P}},{\munderbar{R}}_{[i]}^{-1}} & \0 \\
					* & * & * & * & \mathcal{L}_{{\bar{P}},\hat{\bar{\bm{R}}}_{[i]}}^{{\bar{P}},{\bar{R}}_{[i]}^{-1}}
				\end{bmatrix} \succ 0},
		\end{align}
		and
		%	$(\munderbar{b},\bar{b},\munderbar{S}_{[i]},\bar{S}_{[i]})=$
		$\scalemath{0.98}{(\munderbar{\bm{b}},\bar{\bm{b}},\hat{\munderbar{\bm{S}}}^{-1}_{[i]},\hat{\bar{\bm{S}}}^{-1}_{[i]})}$ 
		%Similarly, for each $i \in \mathbb{I}_1^{m_u}$, there exist variables 
		%$(\bm{K},\munderbar{\bm{b}},\bar{\bm{b}},\munderbar{\bm{R}}^{{-1}}_i,\bar{\bm{R}}^{{-1}}_i)$ 
		%there exist variables $\bm{K},\munderbar{\bm{b}},\bar{\bm{b}},\munderbar{\bm{R}}^{{-1}}_i\in \D_+^{\munderbar{m}},\bar{\bm{R}}^{{-1}}_i\in \D_+^{\bar{m}}$ for each $i \in \mathbb{I}_1^{m_u}$ 
		%satisfying the LMI
		%\begin{align}
		%	\label{eq:input_feasible_LMI}
		%	\scalemath{0.9}{
		%		\begin{bmatrix}
		%			\munderbar{\bm{R}}^{{-1}}_i & \0 & \munderbar{\bm{b}} & \0 & \0\\
		%			* & \bar{\bm{R}}^{{-1}}_i & \bar{\bm{b}} & \0 & \0\\
		%			* & * & 2v^u_i  & V^u_i\bm{K} & V^u_i\bm{K} \\
		%			* & * & * & \mathcal{L}_{{\munderbar{P}},\munderbar{\bm{R}}^{{-1}}_i}^{{\munderbar{P}},{\munderbar{R}}_i^{-1}} & \0 \\
		%			* & * & * & * & \mathcal{L}_{{\bar{P}},\bar{\bm{R}}^{{-1}}_i}^{{\bar{P}},{\bar{R}}_i^{-1}}
		%		\end{bmatrix} \succ 0},
		%\end{align}
		%		and 
		%	$(K,\munderbar{b},\bar{b},\munderbar{R}_{[i]},\bar{R}_{[i]})=$
		, $\scalemath{0.98}{(\bm{K},\munderbar{\bm{b}},\bar{\bm{b}},\hat{\munderbar{\bm{R}}}^{-1}_{[i]},\hat{\bar{\bm{R}}}^{-1}_{[i]})}$ satisfy \eqref{eq:state_feasible_NL}, \eqref{eq:input_feasible_NL}.
		
		%($d$) There exist variables $\munderbar{\bm{b}}, \bm{r}_{\mathrm{1}}$ and $\munderbar{\bm{M}}^{{-1}} \in \D_+^{\munderbar{m}}$ satisfying
		%\begin{align}
		%	\label{eq:mRPI_ellipse_LMI}
		%		\mathcal{L}_{\munderbar{P},\munderbar{\bm{M}}^{{-1}}}^{\munderbar{P},\munderbar{{M}}^{{-1}}} - \I \succ 0, && 
		%	\begin{bmatrix} \munderbar{\bm{M}}^{{-1}} & \munderbar{\bm{b}} \\ * & \bm{r}_{\mathrm{1}} \end{bmatrix} \succ 0,
		%\end{align}
		%and $(\munderbar{b},r_{\mathrm{1}},\munderbar{M})=(\munderbar{\bm{b}},\bm{r}_{\mathrm{1}},\munderbar{\bm{M}})$ satisfy \eqref{eq:mRPI_ellipse_NL}.
		%
		%($d$) There exist variables $\bar{\bm{b}},\bm{r}_{\mathrm{1}}$ and ${\bar{\bm{\mu}}}_i>0$ for each $i \in \mathbb{I}_1^{\bar{m}}$ satisfying the LMI
		%\begin{align}
		%	\label{eq:MPI_ellipse_LMI}
		%	\scalemath{0.95}{
		%		\begin{bmatrix}
		%			{\bar{\bm{\mu}}}_i & \bm{r}_{\mathrm{1}} & \0 \\
		%			      *            &   2\bar{\bm{b}}_i &   \bar{P}_i \\
		%			      * & * & \mathcal{L}_{1,{\bar{\bm{\mu}}}_i}^{1,{\bar{{\mu}}}_i} \I
		%		\end{bmatrix} \succ 0},
		%\end{align}
		%and $(\bar{b},r_{\mathrm{1}},\bar{\mu}_i)=(\bar{\bm{b}},\bm{r}_{\mathrm{1}},{\bar{\bm{\mu}}})$ satisfy \eqref{eq:MPI_ellipse_NL}.
		%		\comment{
		($iv$) \textit{Dissipativity condition \eqref{eq:dissipativity_NL}}: 
		There exists $(\bm{A},\bm{B},\bm{K},\tilde{\bm{\Theta}})$ satisfying the LMI
		\vspace{-2pt}
		\begin{align}
			\label{eq:dissipativity_LMI}
			%					&\hspace{40pt}
			%				 \exists (\bm{A},\bm{B},\bm{K},\tilde{\bm{\Theta}}) \text{ satisfying}\nonumber \\
			&\scalemath{0.83}{
				\begin{bmatrix}
					\I & \0 & \0 & -\bm{B}^{\top} & \bm{K} \\
					* & \tilde{Q}^{-1} & \0 & \0 & \I  \\
					* & * &  \tilde{R}^{-1} &  \0 & \bm{K}  \\
					* & * & * & \mathcal{L}_{\I,\tilde{\bm{\Theta}}}^{\I,\tilde{{\Theta}}}+\mathcal{L}_{\bm{B}^{\top},\I}^{{B}^{\top},\I} & \bm{A} \\
					* & * & * & * & \tilde{\bm{\Theta}}+\mathcal{L}_{\bm{K},\I}^{K,\I}
				\end{bmatrix} \succ 0},
			%				&\hspace{30pt}\text{	and 
			%$(A,B,K,\tilde{\Theta})=$
			%					$(\bm{A},\bm{B},\bm{K},\tilde{\bm{\Theta}})$ satisfies \eqref{eq:dissipativity_NL}.} \nonumber
		\end{align}
		and 
		%			%$(A,B,K,\tilde{\Theta})=$
		$(\bm{A},\bm{B},\bm{K},\tilde{\bm{\Theta}})$ satisfies \eqref{eq:dissipativity_NL}.
		
		($v$) \textit{Performance ellipsoid inclusion condition \eqref{eq:performance_ellipse_NL}}: 
		There exists $(\bar{\bm{b}}, \tilde{\bm{\Theta}}, \tilde{\bm{r}},\hat{\tilde{\bm{M}}})$ satisfying the LMI
		\begin{align}
			\hspace{45pt}
			\label{eq:performance_ellipse_LMI}
			%				&\hspace{20pt}\exists (\bar{\bm{b}}, \tilde{\bm{\Theta}}, \tilde{\bm{r}},\hat{\tilde{\bm{M}}}) \text{ satisfying } \nonumber \\
			&\mathcal{L}_{\bar{P},\hat{\tilde{\bm{M}}}}^{\bar{P},\tilde{{M}}^{{-1}}} - \tilde{\bm{\Theta}} \succ 0, \quad 
			\begin{bmatrix} \hat{\tilde{\bm{M}}} & \bar{\bm{b}} \\ * & \tilde{\bm{r}} \end{bmatrix} \succ 0, 
			%				&\hspace{10pt}\text{and 
			%	$(\bar{b},\tilde{\Theta},\tilde{r},\tilde{M})=$
			%					$(\bar{\bm{b}},\tilde{\bm{\Theta}},\tilde{\bm{r}},\hat{\tilde{\bm{M}}}^{-1})$ satisfies \eqref{eq:performance_ellipse_NL}.} \nonumber  \qquad  \qquad\hfill\square
		\end{align}
		and 
		%			%	$(\bar{b},\tilde{\Theta},\tilde{r},\tilde{M})=$
		$(\bar{\bm{b}},\tilde{\bm{\Theta}},\tilde{\bm{r}},\hat{\tilde{\bm{M}}}^{-1})$ satisfies \eqref{eq:performance_ellipse_NL}.
		%		}
		%		$\hfill\square$
		%
		%($f$) There exist variables $\tilde{\bm{P}},\tilde{\bm{r}},\bm{r}_{\mathrm{1}}$ and $\bar{\bm{\sigma}}^{{-1}}>0$ satisfying
		%\begin{align}
		%	\label{eq:performance_circle_LMI}
		%	\mathcal{L}_{\bm{r}_{\mathrm{1}},\bar{\bm{\sigma}}^{{-1}}}^{{r}_{\mathrm{1}},\bar{{\sigma}}^{{-1}}}-\tilde{\bm{r}}>0, && \begin{bmatrix} \bar{\bm{\sigma}}^{{-1}}\I & \I \\ * & \tilde{\bm{P}} \end{bmatrix} \succ 0, 
		%%	&&
		%%	\tilde{\bm{P}}-\bar{\bm{\sigma}}\I \succ 0, && \bar{\bm{\sigma}}r_{\mathrm{3}}+\bar{\sigma}\bm{r}_{\mathrm{3}}-\tilde{\bm{r}} > \bar{\sigma}r_{\mathrm{3}},
		%\end{align}
		%and $(\tilde{P},\tilde{r},r_{\mathrm{1}},\bar{\sigma})=(\tilde{\bm{P}},\tilde{\bm{r}},\bm{r}_{\mathrm{1}},\bar{\bm{\sigma}})$ satisfy \eqref{eq:performance_circle_NL}.
	\end{theorem}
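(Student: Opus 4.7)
The plan is to prove each of the five parts (i)--(v) by the same three-step template, applied in turn to \eqref{eq:mRPI_NL}, \eqref{eq:MPI_NL}, \eqref{eq:state_feasible_NL}--\eqref{eq:input_feasible_NL}, \eqref{eq:dissipativity_NL} and \eqref{eq:performance_ellipse_NL}. First I would apply Schur complements to the given NLMI so that every non-convex entry appears as a block of the form $\mathcal{N}_{L,D} = L^{\top} D^{-1} L$. Because each NLMI contains only products of the form $b^{\top} D b$, $X^{\top} D X$ with $D \in \D_+$, and bilinear dynamics terms of the type $B K$, it is enough to Schur complement against the diagonal $D$ blocks (after the change of variable $\hat{D} := D^{-1}$, which then enters affinely in the expanded matrix); an extra Schur pass against an $\I$ block will handle the bilinearity in $B$ and $K$, producing the $\mathcal{N}_{\bm{K},\I}$ and $\mathcal{N}_{\bm{B}^{\top}\munderbar{P}_i^{\top},\I}$ summands that appear in \eqref{eq:mRPI_LMI}--\eqref{eq:dissipativity_LMI}.

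Next I would replace every resulting block $\mathcal{N}_{\bm{L},\bm{D}}$ by its convex underestimate $\mathcal{L}^{L,D}_{\bm{L},\bm{D}}$, with $(L,D)$ chosen as the corresponding entries of the current iterate $Z_{\mathrm{NL}}$. By Proposition \ref{prop:linearization} the inequality $\mathcal{N}_{\bm{L},\bm{D}} \succeq \mathcal{L}^{L,D}_{\bm{L},\bm{D}}$ shows that each of \eqref{eq:mRPI_LMI}--\eqref{eq:performance_ellipse_LMI} is a sufficient condition for the Schur-expanded NLMI, hence for the original NLMI, after substituting back $\munderbar{\bm{D}}_{[i]}=\hat{\munderbar{\bm{D}}}^{-1}_{[i]}$, $\bm{W}_{[i]}=\hat{\bm{W}}^{-1}_{[i]}$, and analogously for the other diagonal variables. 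Feasibility of the LMI at the current iterate follows in one line from the identity $\mathcal{N}_{L,D} = \mathcal{L}^{L,D}_{L,D}$ in Proposition \ref{prop:linearization}: setting $\bm{L}=L$, $\bm{D}=D$ and $\hat{\bm{D}} = D^{-1}$ turns the LMI back into the Schur expansion of the NLMI, which holds by assumption on $Z_{\mathrm{NL}}$.

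The main obstacle is the bookkeeping of Schur reductions in part (i): the RPI block \eqref{eq:mRPI_NL} carries three matrix nonlinearities ($F^{\top} W_{[i]} F$, $\munderbar{P}^{\top} \munderbar{D}_{[i]} \munderbar{P}$, and $\munderbar{P}_i B K$) as well as the two scalar quadratic forms $\munderbar{b}^{\top} \munderbar{D}_{[i]} \munderbar{b}$ and $d^{\top} W_{[i]} d$ nested inside its $(1,1)$-entry. One has to check that the specific arrangement of auxiliary rows containing $\munderbar{\bm{b}}, \bm{d}, \hat{\munderbar{\bm{D}}}_{[i]}, \hat{\bm{W}}_{[i]}$, together with the two $\I$-Schur blocks used to split $\munderbar{P}_i \bm{B} \bm{K}$ into the sum $\mathcal{L}^{B^{\top}\munderbar{P}_i^{\top},\I}_{\bm{B}^{\top}\munderbar{P}_i^{\top},\I} + \mathcal{L}^{K,\I}_{\bm{K},\I}$ plus the surviving affine cross terms $\munderbar{P}_i \bm{A}$ and $\bm{K}$, reproduces \eqref{eq:mRPI_LMI} exactly. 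Because every diagonal block introduced by the Schur reductions is positive definite (namely $\hat{\munderbar{\bm{D}}}_{[i]}, \hat{\bm{W}}_{[i]}, \I, \tilde{Q}^{-1}, \tilde{R}^{-1}$), each reduction is a strict equivalence and no sufficient-condition slack is introduced beyond the single underestimate step itself.

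Parts (ii)--(v) then follow the same pattern with strictly fewer nonlinearities and are essentially corollaries of the reduction used in (i): part (ii) drops the disturbance block $F^{\top} W_{[i]} F$; part (iii) drops all dynamics, leaving only the pair $\mathcal{N}_{\munderbar{P},\hat{\munderbar{\bm{S}}}_{[i]}}, \mathcal{N}_{\bar{P},\hat{\bar{\bm{S}}}_{[i]}}$ (and its input analogue, which reuses the $V^u_i \bm{K}$ splitting from (i)); part (iv) linearizes the $\tilde{\bm{\Theta}}^{-1}$ block arising from Schur complementing $(A+BK)^{\top} \tilde{\Theta}(A+BK)$ and the $\bm{B}^{\top} \bm{B}$ block from splitting the $\bm{B} \bm{K}$ bilinearity; and part (v) is a single Schur step on the scalar ellipsoid inclusion $\tilde{r} - \bar{b}^{\top} \tilde{M} \bar{b} > 0$ followed by one application of the underestimate to $\bar{P}^{\top} \tilde{\bm{M}} \bar{P}$. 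In each case feasibility of the LMI at $Z_{\mathrm{NL}}$ is obtained, as in part (i), from $\mathcal{N}_{L,D} = \mathcal{L}^{L,D}_{L,D}$.
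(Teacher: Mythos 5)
Your proposal is correct and follows essentially the same route as the paper's proof: Schur-complement the quadratic forms and the $\tilde{\Theta}^{-1}$/$D^{-1}$ blocks into extra rows, split the $BK$-type bilinearities via a rank-structured term Schur-complemented against an identity block, and then apply Proposition~\ref{prop:linearization} to replace each diagonal $\mathcal{N}$ block by its convex underestimate $\mathcal{L}$. Your explicit use of $\mathcal{N}_{L,D}=\mathcal{L}^{L,D}_{L,D}$ to establish that the current iterate itself satisfies the LMIs (the existence claim) is a point the paper leaves implicit, and is a welcome addition.
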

	%where $\bm{Z}$ consists of all the bold variables in the formulation of the constraints. 
	%Then, the objective function of Problem \eqref{eq:main_SDP} satisfies the inequality $\alpha \norm{\munderbar{\bm{b}}}_1-\beta \bm{r}_\mathrm{1}+\gamma \bm{r}_\mathrm{2}\leq \alpha \norm{\munderbar{{b}}}_1-\beta {r}_\mathrm{1}+\gamma {r}_\mathrm{2}$.
	\begin{proof}
		The proof follows by using the Schur complement and Proposition \ref{prop:linearization} on \eqref{eq:mRPI_NL}-\eqref{eq:input_feasible_NL}. We detail the proof of Part $(i)$, since Parts $(ii)$-$(v)$ follow with similar arguments.\\
		Part ($i$) : As $(A,B,K,\munderbar{b},d,\munderbar{D}_{[i]},W_{[i]})$ in $Z_{\mathrm{NL}}$ satisfy \eqref{eq:mRPI_NL}, we take a Schur complement of the $(1,1)$ block to obtain
		%		 of \eqref{eq:mRPI_NL} to obtain
		\begin{align}
			\label{eq:mRPI_step1}
			\scalemath{0.88}{
				\begin{bmatrix}
					\munderbar{{D}}^{{-1}}_{[i]} & \0 & \munderbar{{b}} & \0 & \0 \\
					* &  W^{-1}_{[i]} & {d} & \0 & \0  \\
					* & * & 2\munderbar{b}_i & \munderbar{P}_i & \munderbar{P}_i{A}+\munderbar{P}_i BK \\
					* & * & * & F^{\top}W_{[i]}F & \0 \\
					* & * & * & * & \munderbar{P}^{\top} \munderbar{D}_{[i]} \munderbar{P}
				\end{bmatrix} \succ 0.
			}
		\end{align}
			Defining $\hat{W}_{[i]}:=W_{[i]}^{-1}$ and $\hat{\munderbar{D}}_{[i]}:=\munderbar{D}_{[i]}^{-1}$, 
			Eq.~\eqref{eq:mRPI_step1} is nonlinear in variables $(B,K,\hat{\munderbar{D}}_{[i]},\hat{W}_{[i]})$ in 
			%		 {\color{magenta} $W_{[i]}^{-1}$, $\munderbar{D}_{[i]}^{-1}$ and $(B,K)$ in 
			the blocks $(4,4)$, $(5,5)$, $(3,5)$ and $(5,3)$. 
		%		
		%		 w.r.t. $(A,B,K,\munderbar{b},\delta,\munderbar{D}_{[i]}^{-1},W_{[i]}^{-1})$ in $(4,4)$, $(5,5)$, $(3,5)$ and $(5,3)$ blocks. 
		Then, we write \eqref{eq:mRPI_step1} as
		%		In order to linearize about the off-diagonal components, we write \eqref{eq:mRPI_step1} as
		\begin{align}
			\label{eq:mRPI_step2}
			\scalemath{0.88}{
				\begin{bmatrix}
					\hat{\munderbar{{D}}}_{[i]} & \0 & \munderbar{{b}} & \0 & \0 \\
					* &  \hat{W}_{[i]} & d & \0 & \0  \\
					* & * & \munderbar{\mathcal{N}}_{i,33} & \munderbar{P}_i & \munderbar{P}_i{A} \\
					* & * & * & \mathcal{N}_{F,\hat{W}_{[i]}} & \0 \\
					* & * & * & * & \munderbar{\mathcal{N}}_{i,55}
				\end{bmatrix}-\munderbar{\mathcal{K}}_i^{\top} \munderbar{\mathcal{K}}_i \succ 0, 
			}
		\end{align}
		where $\munderbar{\mathcal{N}}_{i,33}:=2\munderbar{b}_i+\mathcal{N}_{B^{\top}\munderbar{P}_i^{\top},\I}$, $\munderbar{\mathcal{N}}_{i,55}:=\mathcal{N}_{\munderbar{P},\hat{\munderbar{D}}_{[i]}}+
		\mathcal{N}_{K,\I}$, and $\munderbar{\mathcal{K}}_i:=[\0 \ \ \0 \ \ -B^{\top} \munderbar{P}_i^{\top} \ \ \0 \ \ K]$ (with the function $\mathcal{N}_{.,.}$ defined in Proposition \ref{prop:linearization}). Taking Schur complement of \eqref{eq:mRPI_step2},
		%	
		%	 Then, we take a Schur complement of \eqref{eq:mRPI_step2} to obtain
		%	\begin{align}
		%		\label{eq:mRPI_step3}
		%		\scalemath{0.92}{
		%			\begin{bmatrix}
		%				\I & \0 & \0 & -{B}^{\top}\munderbar{P}_i^{\top} & \0 & {K} \\
		%				* & \munderbar{{D}}^{{-1}}_{[i]} & \0 & \munderbar{{b}} & \0 & \0 \\
		%				* & * &  W^{-1}_{[i]} & {d} & \0 & \0  \\
		%				* & * & * & \munderbar{\mathcal{N}}_{i,33} & \munderbar{P}_i & \munderbar{P}_i{A} \\
		%				* & * & * & * & \mathcal{N}_{F,W^{-1}_{[i]}} & \0 \\
		%				* & * & * & * & * & \munderbar{\mathcal{N}}_{i,55}
		%			\end{bmatrix} \succ 0}
		%	\end{align}
		\begin{align}
			\label{eq:mRPI_step3}
			\scalemath{0.88}{
				\hspace{-5pt}
				\begin{bmatrix}
					\I &\hspace{-5pt} \0 &\hspace{-5pt} \0 &\hspace{-5pt} -{B}^{\top}\munderbar{P}_i^{\top} &\hspace{-5pt} \0 &\hspace{-5pt} {K} \\
					* &\hspace{-5pt} \hat{\munderbar{{D}}}_{[i]} &\hspace{-5pt} \0 &\hspace{-5pt} \munderbar{{b}} &\hspace{-5pt} \0 &\hspace{-5pt} \0 \\
					* &\hspace{-5pt} * &\hspace{-5pt}  \hat{W}_{[i]} &\hspace{-5pt} d &\hspace{-5pt} \0 &\hspace{-5pt} \0  \\
					* &\hspace{-5pt} * &\hspace{-5pt} * &\hspace{-5pt} 2\munderbar{b}_i+\mathcal{N}_{B^{\top}\munderbar{P}_i^{\top},\I} &\hspace{-5pt} \munderbar{P}_i &\hspace{-5pt} \munderbar{P}_i{A} \\
					* &\hspace{-5pt} * &\hspace{-5pt} * &\hspace{-5pt} * &\hspace{-5pt} \mathcal{N}_{F,\hat{W}_{[i]}} &\hspace{-5pt} \0 \\
					* &\hspace{-5pt} * &\hspace{-5pt} * &\hspace{-5pt} * & \hspace{-5pt} * &\hspace{-5pt} \mathcal{N}_{\munderbar{P},\hat{\munderbar{D}}_{[i]}}+
					\mathcal{N}_{K,\I}
				\end{bmatrix} \succ 0}
		\end{align}
		results, with all nonlinear components collected in the diagonal blocks.
		Using Proposition~\ref{prop:linearization} on these components, we  conclude that \eqref{eq:mRPI_LMI} is a sufficient LMI condition for \eqref{eq:mRPI_step3}.
	\end{proof}
	\begin{corollary}
		\label{corr:update_corollary}
		Suppose that $Z_{\mathrm{NL}}$ is feasible for Problem \eqref{eq:main_SDP}. 
		Then, the solution of the SDP
		\begin{align}
			\label{eq:update_SDP}
			\scalemath{0.98}{
				\begin{matrix}
					\hspace{-70pt} \min_{\bm{Z}} \ \alpha \norm{\munderbar{\bm{b}}}_1+\beta \norm{\bar{\bm{\epsilon}}}_1+\gamma \tilde{\bm{r}} \vspace{3pt} \vspace{2pt}\\
					\qquad \text{ s.t. } (\bm{A},\bm{B},\bm{d}) \in \hat{\Sigma}_T, 
					\ \ (\bar{\bm{b}},\bar{\bm{\epsilon}}) \in \bar{\mathcal{S}}, 
					\ \ \eqref{eq:mRPI_LMI}-\eqref{eq:performance_ellipse_LMI}, \vspace{-7pt}\\
			\end{matrix}}
		\end{align}
		\begin{align*}
			\vspace{-5pt}
			\scalemath{0.9}{
				{\bm{Z}}:=\begin{pmatrix}
					\bm{A},\bm{B},\bm{d},\bm{\mathcal{Z}},\bm{\lambda},\bm{K},\munderbar{\bm{b}},\bar{\bm{b}},\tilde{\bm{\Theta}}, \tilde{\bm{r}},\hat{\tilde{\bm{M}}},\bar{\bm{\epsilon}},\\
					\{\hat{\munderbar{\bm{D}}}_{[i]},{{\hat{\bm{W}}}}_{[i]},i \in \mathbb{I}_1^{\munderbar{m}}\},
					\{{\hat{\bar{\bm{D}}}}_{[i]}, i \in \mathbb{I}_1^{\bar{m}}\},\\
					\{\hat{\munderbar{\bm{S}}}_{[i]},\hat{\bar{\bm{S}}}_{[i]},i\in \mathbb{I}_1^{m_x}\},
					\{\hat{\munderbar{\bm{R}}}_{[i]},\hat{\bar{\bm{R}}}_{[i]},i \in \mathbb{I}_1^{m_u}\}
			\end{pmatrix}},
		\end{align*} 
		is feasible for Problem  \eqref{eq:main_SDP}, and satisfies the cost decrease condition $\alpha \norm{\munderbar{\bm{b}}}_1+\beta \norm{\bar{\bm{\epsilon}}}_1+\gamma \tilde{\bm{r}}\leq \alpha\norm{\munderbar{{b}}}_1+\beta \norm{\bar{{\epsilon}}}_1+\gamma \tilde{r}.$
		%		and the solution of \eqref{eq:update_SDP} is feasible for Problem \eqref{eq:main_SDP}.
		$\hfill\square$
	\end{corollary}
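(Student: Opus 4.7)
The plan is to derive both conclusions of the Corollary from Theorem~\ref{thm:main_result} together with a single, essentially cosmetic observation: by Proposition~\ref{prop:linearization} the convex underestimate $\mathcal{L}^{L,D}_{L,D}$ coincides with $\mathcal{N}_{L,D}$ when evaluated at its own linearization point. Concretely, the strategy is to (i) exhibit an explicit feasible point of the SDP~\eqref{eq:update_SDP} constructed directly from the current iterate $Z_{\mathrm{NL}}$, (ii) read off from that feasible point a cost identical to that of $Z_{\mathrm{NL}}$ in the original NLPMI~\eqref{eq:main_SDP}, and (iii) invoke the second half of each item of Theorem~\ref{thm:main_result} to convert the optimum of~\eqref{eq:update_SDP} back into a feasible iterate of~\eqref{eq:main_SDP}.

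The first step I would carry out is to construct the feasible point of~\eqref{eq:update_SDP} by copying all variables appearing in both problems from $Z_{\mathrm{NL}}$, i.e., $\bm{A}=A$, $\bm{B}=B$, $\bm{d}=d$, $\bm{K}=K$, $\munderbar{\bm{b}}=\munderbar{b}$, $\bar{\bm{b}}=\bar{b}$, $\tilde{\bm{\Theta}}=\tilde{\Theta}$, $\tilde{\bm{r}}=\tilde{r}$, $\bar{\bm{\epsilon}}=\bar{\epsilon}$, $\bm{\mathcal{Z}}=\mathcal{Z}$, $\bm{\lambda}=\lambda$, and by setting each hat-variable to the inverse of the corresponding diagonal matrix in $Z_{\mathrm{NL}}$, namely $\hat{\munderbar{\bm{D}}}_{[i]}=\munderbar{D}_{[i]}^{-1}$, $\hat{\bm{W}}_{[i]}=W_{[i]}^{-1}$, and similarly for $\hat{\bar{\bm{D}}}_{[i]}, \hat{\munderbar{\bm{S}}}_{[i]}, \hat{\bar{\bm{S}}}_{[i]}, \hat{\munderbar{\bm{R}}}_{[i]}, \hat{\bar{\bm{R}}}_{[i]}, \hat{\tilde{\bm{M}}}$. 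The affine memberships $(\bm{A},\bm{B},\bm{d})\in\hat{\Sigma}_T$ and $(\bar{\bm{b}},\bar{\bm{\epsilon}})\in\bar{\mathcal{S}}$ are syntactically identical in both problems and therefore hold by feasibility of $Z_{\mathrm{NL}}$. Each LMI~\eqref{eq:mRPI_LMI}--\eqref{eq:performance_ellipse_LMI} in turn reduces to the corresponding NLMI~\eqref{eq:mRPI_NL}--\eqref{eq:performance_ellipse_NL} because every $\mathcal{L}^{L,D}_{\bm{L},\bm{D}}$ block evaluates to $\mathcal{N}_{\bm{L},\bm{D}}$ at this choice; these NLMIs hold by feasibility of $Z_{\mathrm{NL}}$ for~\eqref{eq:main_SDP}. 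Thus~\eqref{eq:update_SDP} admits a feasible point with cost exactly $\alpha\norm{\munderbar{b}}_1+\beta\norm{\bar{\epsilon}}_1+\gamma\tilde{r}$, so its optimal cost cannot exceed this value, which is precisely the claimed cost-decrease inequality.

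For the remaining claim, that the optimizer $\bm{Z}$ is feasible for~\eqref{eq:main_SDP}, I would apply the second statement of each of parts~$(i)$--$(v)$ of Theorem~\ref{thm:main_result}: since $\bm{Z}$ satisfies~\eqref{eq:mRPI_LMI}--\eqref{eq:performance_ellipse_LMI}, the NLMIs~\eqref{eq:mRPI_NL}--\eqref{eq:performance_ellipse_NL} hold at the same $(\bm{A},\bm{B},\bm{K},\munderbar{\bm{b}},\bar{\bm{b}},\tilde{\bm{\Theta}},\tilde{\bm{r}})$ with the hat-variables replaced by their inverses. Combined with the unchanged constraints $(\bm{A},\bm{B},\bm{d})\in\hat{\Sigma}_T$ and $(\bar{\bm{b}},\bar{\bm{\epsilon}})\in\bar{\mathcal{S}}$, this furnishes a point feasible for~\eqref{eq:main_SDP}. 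I expect the only real obstacle to be notational: carefully tracking which diagonal matrices appear as inverses on either side of the correspondence between $Z_{\mathrm{NL}}$-style and $\bm{Z}$-style variables. No deeper analytical ingredient is required, since the tightness-at-the-linearization-point property supplied by Proposition~\ref{prop:linearization} is exactly what makes both the starting-point construction and the SCP framework work.
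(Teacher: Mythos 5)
Your proposal is correct and follows essentially the same route as the paper: the paper's proof likewise obtains feasibility of the SDP optimizer for Problem~\eqref{eq:main_SDP} from the second statements of Theorem~\ref{thm:main_result}, and obtains the cost decrease by noting that $Z_{\mathrm{NL}}$ (with the diagonal multipliers inverted, which is exactly the existence half of Theorem~\ref{thm:main_result}) is feasible for~\eqref{eq:update_SDP} with the same objective value. Your write-up merely makes explicit the tightness-at-the-linearization-point argument that the paper leaves implicit in Theorem~\ref{thm:main_result}.
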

	\begin{proof}
		The feasibility of $\bm{Z}$ for Problem \eqref{eq:main_SDP} follows from Theorem \ref{thm:main_result}, and the cost decrease condition holds since $Z_{\mathrm{NL}}$ is feasible for Problem \eqref{eq:update_SDP}.
	\end{proof}
	%	Based on Theorem~\ref{thm:main_result} and Corollary~\ref{corr:update_corollary},
	\noindent \vspace{-8pt}\\ 
	We propose the following procedure to solve Problem \eqref{eq:main_SDP}.
	\begin{align*}
		\scalemath{0.98}{
			\hspace{-10pt}
			\begin{matrix*}[l]
				&\text{\underbar{Algorithm $1$ : Update solution of Problem \eqref{eq:main_SDP}}} \\
				&\text{1. Obtain an initial feasible solution $Z_{\mathrm{NL}}$ for Problem \eqref{eq:main_SDP}.} \\
				&\text{2. Solve the SDP in \eqref{eq:update_SDP} for the updated variables $\bm{Z}$;} \\
				&\hspace{12pt}\text{Recover feasible values $Z_{\mathrm{NL}}$ from the solution.} \\
				&\text{3. Evaluate the objective value $\alpha\norm{\munderbar{{b}}}_1+\beta \norm{\bar{{\epsilon}}}_1+\gamma \tilde{r}$;} 
				\vspace{-0pt}\\
				&\hspace{0pt}\text{4. If there is a reduction from previous iteration, repeat}\\
				&\hspace{12pt}\text{Step 2 using $Z_{\mathrm{NL}}$ for linearization. Else, terminate. } 
				%				&\hspace{12pt}\text{} 
				\hfill\square
		\end{matrix*} }
	\end{align*} 
		We note that the results in \cite[Chapter 4]{TranDinhQuoc2012SCPa} can be used to study the convergence of Algorithm $1$. Further analysis is beyond the scope of the current work.
	%	We now present a heuristic to obtain an initial feasible $Z_{\mathrm{NL}}$.
	%formulate \textbf{S1} of Algorithm~1.
	
	{\textbf{($b$) Initialization procedure}}: 
	We propose the following procedure 
	to compute an initial feasible solution $Z_{\mathrm{NL}}$.
	\\
	($i$) Select some $\hat{\theta}_T>0$ through a guess to characterize $\hat{\Sigma}_T$.
		%		
		%		Characterize the set $\hat{\Sigma}_T$ following Algorithm 2.
		\\	
		($ii$) Solve the LP $\arg\min_{{A},{B},{d}} \{ \norm{d}_1 \text{ s.t. } ({A},{B},{d}) \in \hat{\Sigma}_T\}$ for an initial feasible model $(A,B,d)$.
		\\
		($iii$) Use the method in \cite{Liu2019} to compute an initial RPI set $\Delta \mathcal{X}=\mathcal{P}(\munderbar{P},\munderbar{b})$ satisfying \eqref{eq:RPI_inclusion_p} along with a feedback gain $K$, while enforcing $\mathcal{P}(\munderbar{P},\munderbar{b})\subset \mathcal{X}$ and $K\mathcal{P}(\munderbar{P},\munderbar{b})\subset \mathcal{U}$. \\
		($iv$) Compute the tightened constraint set $\mathcal{O}_0:=\{x: x \in \mathcal{X} \ominus \Delta \mathcal{X}, Kx \in \mathcal{U} \ominus K \Delta \mathcal{X}\}$, and then compute a PI set $\mathcal{X}_{\mathrm{t}}=\mathcal{P}(\bar{P},\bar{b})$ using the method in \cite{Liu2019} for $x(t+1)=(A+BK)x(t)$.\\
		($v$) Compute the remaining variables formulating Problem \eqref{eq:main_SDP} by solving $\min_{{Z}_{\mathrm{I}}} \{\tilde{r}:$  \eqref{eq:mRPI_NL}--\eqref{eq:input_feasible_NL}, \eqref{eq:dissipativity_NL}--\eqref{eq:performance_ellipse_NL}$\}$, where
		\begin{align*}
			\scalemath{0.88}{
				{Z}_{\mathrm{I}}:=\begin{pmatrix}{\{\munderbar{D}}_{[i]},{{W}}_{[i]},i \in \mathbb{I}_1^{\munderbar{m}}\},\{{\bar{D}}_{[i]}, i \in \mathbb{I}_1^{\bar{m}}\},\{\munderbar{S}_{[i]},\bar{S}_{[i]},i\in \mathbb{I}_1^{m_x}\}, \\
					\{\munderbar{R}_{[i]},\bar{R}_{[i]},i \in \mathbb{I}_1^{m_u}\},\tilde{\Theta},\tilde{r},\tilde{M}
			\end{pmatrix}}.
		\end{align*}
		\begin{remark}
			In Steps ($ii$),($iii$), the methods in \cite{Liu2019} guarantee the feasibility of the SDP in Step ($v$), since they are also formulated using Theorem \ref{thm:inclusion_result}. Other methods, e.g. \cite{Trodden2016,Gilbert1991}, can also be used if the feasibility of Step ($v$) is ensured.
			%			Any other method, (e.g, \cite{Trodden2016,Gilbert1991}) can be used if feasibility of Step ($iv$) is ensured.
			%	 Alternative methods that ensure feasibility of the SDP can also be used. 
			%		
			%		existence of variables $(\munderbar{D}_{[i]},\munderbar{W}_{[i]})$ satisfying \eqref{eq:mRPI_NL}.
			$\hfill\square$
		\end{remark}
	\begin{figure}[t]
		\centering
		\vspace{-0.2cm}
		\hspace{-1cm}
		\resizebox{.48\textwidth}{!}{
			\begin{tikzpicture}
				\begin{scope}[shift={(-10,0)}]
					\node[draw=none,fill=none](sets_fig) at (0,0) {\includegraphics[trim=78 301 110 301,clip,scale=0.4]{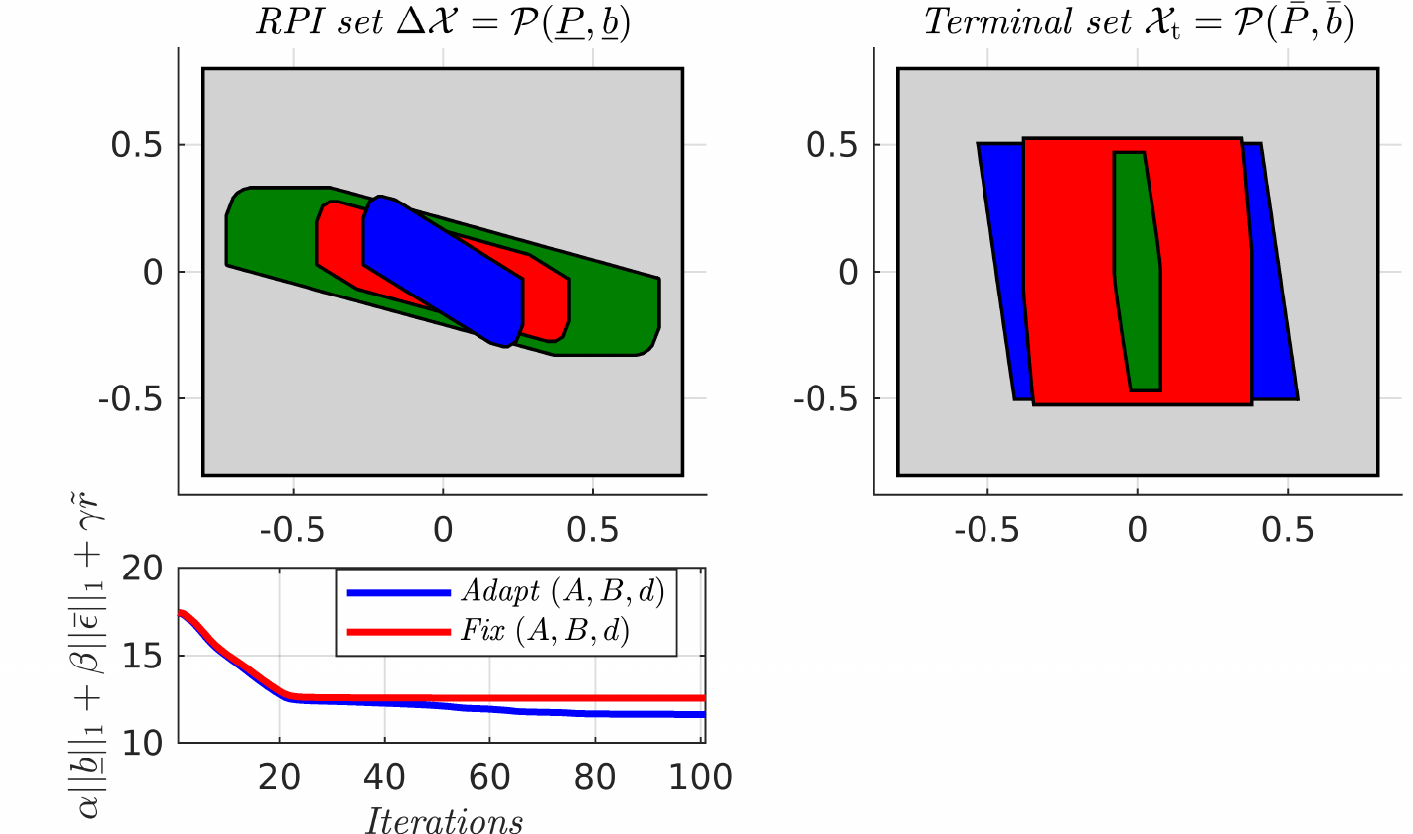}};
					\node[label=above:](Y5_set_left) [above left=-0.25cm and -2.2cm of sets_fig]{};
					\node[label=above:](Y5_set_right) [right=0.15cm of Y5_set_left]{};
					%				\draw [line width=0.4mm, blue] (Y5_set_left) -- (Y5_set_right);
					\node[](mRPI_label) [right=-0.35cm of Y5_set_right] {};
					\node[](MPI_label) [right=3.4cm of mRPI_label] {};
					\node[](xdot_left) [below left=1cm and 1.05cm of mRPI_label] {$\scalemath{0.5}{\dot{\mathrm{x}}}$};
					\node[](x_left) [below right=0.2cm and 0.56cm of xdot_left] {$\scalemath{0.5}{{\mathrm{x}}}$};
					\node[](xdot_left) [below left=1cm and 1.85cm of MPI_label] {$\scalemath{0.5}{\dot{\mathrm{x}}}$};
					\node[](x_left) [below right=0.2cm and 0.56cm of xdot_left] {$\scalemath{0.5}{{\mathrm{x}}}$};
					\node[](tablee) [below right=0.7cm and -0.8cm of xdot_left] {${\scalemath{0.58}{\begin{matrix}&\norm{\munderbar{b}}_1 & \norm{\bar{\epsilon}}_1 & \tilde{r} \\ \hline
									\text{Initial} & 10 & 6.847 & 6 \\
									\text{Adapt}  & 5.564 & 4.005 & 20.613  \\
									\text{Fix} & 6.557 & 4.513 & 14.865\end{matrix}}}$};
				\end{scope}
		\end{tikzpicture}}
		\captionsetup{width=1\linewidth}
		\caption{Results of Algorithm~1. Gray sets-$\mathcal{X}$, Green sets-Initialization, Blue sets-$(A,B,d)$ as optimization variables, Red sets - Fix $(A,B,d)$ to initial values.
			%		 and more suitable invariant sets are computed for RMPC synthesis.
			%		
			%		 Green-initialization, Blue-Allow plant model to adapt using $\mathcal{D}$, Red-Fix plant model to \eqref{eq:init_feasible_model}. 
			%		
			%		 More suitable sets are computed by allowing the plant model to adapt.
			%		
			%		
			%		The objective value is lower at termination when $(A,B,d)$ are allowed to adapt in the iterations of Algorithm~1. The corresponding inital and final sets are also plotted: Algorithm~1 is initialized with the green sets, and it terminates at the blue sets when $(A,B,d)$ adapts, and at the red sets otherwise. Closed-loop trajectories  with the RMPC controller when the system is initialized at a vertex of $\Omega_{10}$ are plotted for illustration.
		} \label{fig:main_result}
		%\mario{no need for the last sentence}
	\end{figure}
	\section{Numerical example}
	% \paragraph{Example 1} 
	We consider a nonlinear mass-spring-damper system with dynamics $\mathrm{F}=\mathrm{m}\ddot{\mathrm{x}}+(\mathrm{K}\mathrm{x}+\mathrm{K}_{\mathrm{NL}}\mathrm{x}^2)+(\mathrm{c}\dot{\mathrm{x}}+\mathrm{c}_{\mathrm{NL}}\dot{\mathrm{x}}^2)+\mathrm{F}_{\delta}$, where
	% 
	% with dynamics
	%	\begin{align}
	%		\label{eq:NL_plant}
	%		\mathrm{F}=\mathrm{m}\ddot{\mathrm{x}}+(\mathrm{K}\mathrm{x}+\mathrm{K}_{\mathrm{NL}}\mathrm{x}^2)+(\mathrm{c}\dot{\mathrm{x}}+\mathrm{c}_{\mathrm{NL}}\dot{\mathrm{x}}^2)+\mathrm{F}_{\delta},
	%	\end{align}
	$u = \mathrm{F}$, $x = [\mathrm{x} \ \dot{\mathrm{x}}]^{\top}$, and constraints $\mathcal{X}=\{x:||x||_{\infty}\leq 0.8\}$, $\mathcal{U}=\{u:||u||_{\infty}\leq 2.5\}$.
	We simulate the plant using $\mathrm{ode45}$ integration to build the dataset $\mathcal{D}$ with $T=1000$ at a $0.1s$ time interval. We set $\mathrm{K}_{\mathrm{NL}},\mathrm{c}_{\mathrm{NL}}=0.12$, and uniformly sample the parameters $\mathrm{m},\mathrm{K},\mathrm{c}$ in $(0.44,0.56)$ and $\mathrm{F}_{\delta}$ in $(-0.12,0.12)$ at every timestep $0.1s$.
	% \in (0.44,0.56)$\mario{what does this mean?}, , $ \in 
	%The system constraints are  
	%
	%
	%where $\mathrm{F}$ is the force input in $N$, $\mathrm{x}$ the position in $m$,
	%$\mathrm{m}$ the mass in $Kg$, $\mathrm{K}$ spring stiffness in $N/m$, $K_{\mathrm{NL}}$ nonlinear spring coefficient in $N/m^2$, $\mathrm{c}$ the damping coefficient in $N/(m/s)$ and $\mathrm{c}_{\mathrm{NL}}$ the nonlinear damping coefficient in $N/(m/s)^2$, and $F_{\delta}$ the additive force disturbance. We simulate this system using the $\mathrm{ode45}$ solver, and sample the states $x:=[\mathrm{x} \ \dot{\mathrm{x}}]^{\top}$ and inputs $u=\mathrm{F}$ with a sampling time of $0.1s$. A dataset $\mathcal{D}$ of $T=1000$ steps is collected, with plant parameters $\mathrm{m},\mathrm{K},\mathrm{c}$ uniformly sampled in $(0.44,0.56)$, $\mathrm{K}_{\mathrm{NL}},\mathrm{c}_{\mathrm{NL}}=0.12$, $\mathrm{F}_{\delta}$ uniformly sampled in $(-0.12,0.12)$. We assume that the constraints on the system are $\mathcal{X}=\{x:|x|\leq [0.5 \ 0.5]^{\top}\}$ and $\mathcal{U}=\{u:|u|\leq 2\}$. 
	%
	%
	%The collected dataset is shown in Figure \ref{figure:dataset}. 
	We then use Algorithm $1$ to synthesize a model and RPI sets required for RMPC synthesis. To this end, we follow the initialization procedure described in Section \ref{sec:main_section}-$4(b)$ to obtain an initial feasible $Z_{\mathrm{NL}}$ for Problem \eqref{eq:main_SDP}. We first parametrize the disturbance set $\mathcal{W}$ with $m_w=10$ hyperplanes. Then, following Step $(i)$, we characterize the set $\hat{\Sigma}_T$ with $\hat{\theta}_T=1\cdot10^{-3}$. Then, we compute the initial model $A={\scalemath{0.9}{\begin{bmatrix}0.9967 &0.0951 \\
						-0.0637  & 0.9036\end{bmatrix}}}$, $B={\scalemath{0.9}{\begin{bmatrix}0.0098 \\
						0.1914\end{bmatrix}}}$ and $\norm{d}_1=0.5816$ following Step ($ii$). We initialize the feedback gain as $K=[-0.4140 \ -2.3734]$ which is the optimal LQR gain corresponding to matrices $\tilde{Q}=\mathrm{diag}(1,15)$ and $\tilde{R}=1$. Then, we compute an RPI set $\Delta \mathcal{X}=\mathcal(\munderbar{P},\munderbar{b})$ following Step ($iii$) with $\munderbar{m}=10$ hyperplanes using \cite{Trodden2016}. Similarly, we compute the PI terminal set $\mathcal{X}_{\mathrm{t}}=\mathcal{P}(\bar{P},\bar{b})$ following Step ($iv$) with $\bar{m}=15$ hyperplanes using \cite{Gilbert1991}. Finally, with Step ($v$) we compute the remaining variables formulating $Z_{\mathrm{NL}}$. We parameterize $\mathcal{B}(\bar{\epsilon})$ with $m_{\epsilon}=10$ for terminal set maximization.
	The results obtained with Algorithm~1 with weights $\alpha=1,\beta=1,\gamma=0.1$ using the MOSEK SDP solver \cite{mosek} in MATLAB are shown in Figure~\ref{fig:main_result}. For the purpose of comparison, we also plot the results when the model $(A,B,d)$ is fixed to the initial value.
	%	
	%	
	%	in \eqref{eq:init_feasible_model} is fixed while running Algorithm~1.
	We observe that by allowing Algorithm~1 to adapt the system model using $\hat{\Sigma}_T$, we obtain a lower objective value with a larger terminal set $\mathcal{X}_{\mathrm{t}}$ and a smaller RPI set $\Delta \mathcal{X}$.
	% , and a larger MPC feasible region $\Omega_N$ with $N=10$.
	The model at termination is $A=\scalemath{0.95}{\begin{bmatrix}  0.9967  &  0.0951 \\
				-0.0625  &  0.8990 \end{bmatrix}}$,  $B=\scalemath{0.95}{\begin{bmatrix}  0.0098 \\
				0.1958 \end{bmatrix}}$, and ${\norm{d}_1 = 0.5833}$, and the computed feedback gain is $K=[-1.7062 \ -2.6306]$: the model consists of a larger disturbance set than the initialized value, with $(A,B,d)$ optimizing \eqref{eq:main_SDP} instead of best fitting the data.
	%		 along with the disturbance vector $\norm{d}_1= 0.5539$ 
	%		 : this model
	%\begin{align*}
	%%	\label{eq:final_feasible_model}
	%	A=\scalemath{0.95}{\begin{bmatrix}   0.9968   & 0.0951 \\
	%			-0.0625  & 0.8989 \end{bmatrix},} &&
	%	B=\scalemath{0.95}{\begin{bmatrix} 0.0099 \\
	%			0.1960 \end{bmatrix},} &&
	%	d=\scalemath{0.95}{\begin{bmatrix}  	0.0046 \\ 0.0870
	%		 \end{bmatrix},}
	%\end{align*}
	%	consists of a larger disturbance set compared to \eqref{eq:init_feasible_model}, with matrices optimizing the objective function of \eqref{eq:main_SDP} rather than best fitting the data. 
	In case the model is fixed to the initial value, the feedback gain at termination is $K=[-1.0033   \ -3.0882]$. In order to study the effect of the parameter $\hat{\theta}_T$ characterizing $\hat{\Sigma}_T$, we run Algorithm 1 for increasing values of $\hat{\theta}_T$.  The objective values at termination are $11.631$ for $\hat{\theta}_T=1\cdot10^{-3}$, $11.659$ for $\hat{\theta}_T=1.2\cdot10^{-3}$,  $11.668$ for $\hat{\theta}_T=1.3\cdot10^{-3}$, $13.4376$ for $\hat{\theta}_T=1.5\cdot10^{-3}$: we observe that conservatism increases with $\hat{\theta}_T$, while increasing robustness with respect to the underlying plant. Note that this trend is not guaranteed since Problem \eqref{eq:main_SDP} is an NLPMI.
	%	Hence, the algorithm trades-off a lower closed-loop cost for more suitable invariant sets.
	
		\textit{Computational Complexity: }The SDP in \eqref{eq:update_SDP} consists of an LMI constraint with $\munderbar{m}(2n_x+m_w+n_u+\munderbar{m}+1)+\bar{m}(n_x+n_u+\bar{m}+1)+(m_x+m_u)(2n_x+\munderbar{m}+\bar{m}+1)+(3n_x+2n_u)+(n_x+\bar{m}+1)+n_x=1086$ rows, $n_xm_x^{{v}}=8$ linear equality constraints, and $2m_wT+2m_w(2n_x+n_u)+(2n_x+n_u)+m_w+2m_x^{{v}}(\bar{m}+m_{\epsilon})=20275$ linear inequality constraints over $2(n_x^2+n_xn_u+n_xm_x^v+1)+m_w(2n_x+n_u+1)+\munderbar{m}(\munderbar{m}+m_w+1+m_x+m_u)+\bar{m}(\bar{m}+2+m_x+m_u)+m_{\epsilon}
		%		$
		%		
		%		(2m_x^{{v}}n_x+m_{\epsilon})+\munderbar{m}(m_w+\munderbar{m}+1)+\bar{m}(\bar{m}+2)+(m_x+m_u)(\munderbar{m}+\bar{m})+m_w+2
		=638$ variables. Over multiple runs, the average solving time for the SDP in \eqref{eq:update_SDP} on a laptop with an Intel i7-7500U processor and 16GB of RAM running Ubuntu 16.04 is approximately $1.57$s when the model is allowed to adapt, and $0.78$s when the model is fixed. We note that the number of LMI constraints and variables scale quadratically in $\bar{m}$ and $\munderbar{m}$. Hence, the approach can be computationally expensive if a large number of hyperplanes are required for RPI set representation.
		%	
		%	for higher dimensional systems which require a large number of hyperplanes for RPI set representation. 
		Comparing our approach to \cite{Yuxiao2021} using data from a real-world system is a subject of future work.
	\section{Conclusions}
	This paper has presented a data-driven method based on RPI sets to synthesize RMPC controllers. 
	 To this end, a set of LTI models that can describe the underlying plant behavior is characterized using an input-state dataset. Then, a suitable model along with RPI sets are concurrently computed for RMPC synthesis. This procedure is demonstrated to compute RPI sets with reduced conservativeness when compared to a sequential procedure.
	%		
	%		 characterize a set of linear models that can describe the plant behavior using an input-state dataset, following which we formulate and solve a nonconvex SDP problem to concurrently select a suitable model, along with associated RPI sets for RMPC synthesis. To this end, we build on recent results in \cite{Liu2019}.}
	%		
	%		a model is selected from the feasible set, along with a suitable RPI set for bounding the error state, and a terminal set, 
	%	Building on recent results in \cite{Liu2019} regarding RPI set computation, we have formulated
	%	and shown how to solve a nonconvex SDP to select a system model that is consistent with the dataset, along with RPI sets suitable for RMPC synthesis, i.e., a small tube cross-section RPI set and a large terminal set. 
	Future research will be devoted to estimation techniques for $\hat{\theta}_T$, reducing conservativeness in Theorem \ref{thm:robust_model_theorem}, using input-output datasets, multiplicative uncertainty models, and combining our approach with \cite{Berberich2020}.

	\bibliography{references}
\end{document}